\newif\ifdraft\drafttrue
\newtheorem{corollary}{Corollary}
\newtheorem{proposition}{Proposition}
\newcommand{\Z}{\mathbb Z}
\newcommand{\N}{\mathbb N}
\newcommand{\U}{\mathcal U}
\newcommand{\G}{\mathcal G}
\newcommand{\Lin}{\mathcal L}
\newtheorem{theorem}{Theorem}
\newtheorem{fact}{Fact}
\newtheorem{lemma}{Lemma}
\newtheorem{definition}{Definition}
\newcommand{\set}[1]{\left\{ #1 \right\}}
\newcommand{\A}{\mathcal{A}}
\newcommand{\VE}{\text{V}_{\text{Eve}}}
\newcommand{\VA}{\text{V}_{\text{Adam}}}
\newcommand{\vinit}{v_{\text{init}}}
\newcommand{\qinit}{q_{\text{init}}}
\newcommand{\sinit}{s_{\text{init}}}
\newcommand{\last}{\text{last}}
\newcommand{\dist}{\text{dist}}
\newcommand{\MP}[1]{\text{MeanPayoff}_{ #1 }}
\newcommand{\MPn}[2]{\text{MeanPayoff}_{ #2, #1 }}
\title{The complexity of mean payoff games\\ using universal graphs}
\author{Nathana{\"e}l Fijalkow, Pawe{\l} Gawrychowski, Pierre Ohlmann}
\date{\today}
\begin{document}

\maketitle

\begin{abstract}
We study the computational complexity of solving mean payoff games.
This class of games can be seen as an extension of parity games, 
and they have similar complexity status: in both cases solving them is in $\textbf{NP} \cap \textbf{coNP}$ and not known to be in~$\textbf{P}$.
In a breakthrough result Calude, Jain, Khoussainov, Li, and Stephan constructed in 2017 a quasipolynomial time algorithm for solving parity games,
which was quickly followed by two other algorithms with the same complexity.
Our objective is to investigate how these techniques can be extended to the study of mean payoff games.

The starting point is the notion of separating automata introduced by Boja{\'n}czyk and Czerwi{\'n}ski, 
which have been used to present all three quasipolynomial time algorithms for parity games
and gives the best complexity to date.
The notion naturally extends to mean payoff games and yields a class of algorithms for solving mean payoff games.
The contribution of this paper is to prove tight bounds on the complexity of algorithms in this class.

Colcombet and Fijalkow proved that separating automata are equivalent to universal graphs.
The technical results of this paper are to give upper and lower bounds on the size of universal mean payoff graphs.
The upper bounds yield two new algorithms for solving mean payoff games.
Our first algorithm depends on the largest weight $N$ (in absolute value) appearing in the graph 
and runs in sublinear time in $N$,
improving over the previously known linear dependence in $N$.
Our second algorithm runs in polynomial time for a fixed number $k$ of weights.

We complement our upper bounds by providing in both cases almost matching lower bounds on the size of universal mean payoff graphs, 
showing the limitations of the approach. 
We show that we cannot hope to improve on the dependence in $N$ nor break the linear dependence in the exponent in the number $k$ of weights.
In particular, this shows that separating automata do not yield a quasipolynomial algorithm for solving mean payoff games.
\end{abstract}

\section{Introduction}
A mean payoff game is played over a finite graph whose edges are labelled by integer weights.
The interaction of the two players, called Eve and Adam, describe a path in the graph.
The goal of Eve is to ensure that the (infimum) limit of the weights average is non-negative.

\begin{figure}[ht]
\centering
\includegraphics[width=.5\linewidth]{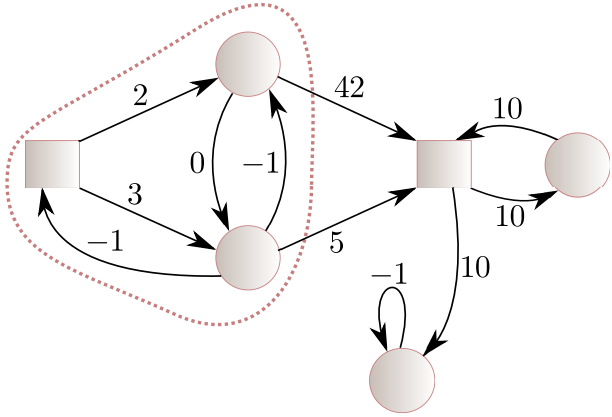}
\caption{A mean payoff game (the dotted region is the set of vertices from which Eve has a strategy ensuring mean payoff).}
\label{fig:mp_game}
\end{figure}

\paragraph*{Early results}
The model of mean payoff games was introduced independently by Ehrenfeucht and Mycielski~\cite{EhrenfeuchtMycielski79}
and by Gurvich, Karzanov, and Khachiyan~\cite{GKK88}.
A fundamental property proved in both papers is that such games are positionally determined,
meaning that for both players, if there exists a strategy ensuring mean payoff, then there exists one using no memory at all.
This holds for both players and is the key argument implying the intriguing complexity status of solving mean payoff games:
the decision problem is in $\textbf{NP}$ and in $\textbf{coNP}$, but not known to be solvable in polynomial time.
It is unlikely to be $\textbf{NP}$-complete, since this would imply that $\textbf{NP} = \textbf{coNP}$.
Hence such a problem is either solvable in polynomial time or an interesting piece in the landscape of computational complexity.

This is one of the reasons that makes the study of mean payoff games exciting.
In addition to game theory, the study of mean payoff games is motivated by verification and synthesis of programs, 
and by their intricate connections to optimisation and linear programming.

\paragraph*{Towards a polynomial time algorithm}
The seminal paper of Zwick and Paterson~\cite{ZwickPaterson96} relates mean payoff games to discounted payoff games
and simple stochastic games, and most relevant to our work, constructs a pseudopolynomial algorithm for solving mean payoff games.
An algorithm is said to be pseudopolynomial if it is polynomial when the weights are given in unary.
More specifically, the algorithm constructed by Zwick and Paterson depends linearly on the largest weight $N$ (in absolute value) 
appearing in the game.
If the weights are given in unary, $N$ is indeed polynomial in the representation.

The question whether there exists a polynomial time algorithm for mean payoff games with the usual representation of weights, meaning in binary,
is open.

\paragraph*{Recent contributions}
The algorithm constructed by Zwick and Paterson runs in time $O(n^3 m N)$,
where $n$ is the number of vertices, $m$ the number of edges, and $N$ the largest weight in the graph.
This has been improved by Brim, Chaloupka, Doyen, Gentilini, and Raskin~\cite{BCDGR11}, yielding an algorithm of complexity $O(n m N)$.
Solving a mean payoff game is very related to constructing an optimal strategy, meaning one achieving the highest possible value.
The state of the art for this problem is due to Comin and Rizzi~\cite{CominRizzi17}, 
who designed a pseudopolynomial time algorithm; the achieved complexity is linear in $N$, as for the best algorithms for solving mean payoff games.

The model of mean payoff games has been recently shown to be strongly connected to the existence of a strongly polynomial time algorithm
for linear programming, which is the 9\textsuperscript{th} item in Smale's list of problems for the 21\textsuperscript{st} century~\cite{Smale98}.
Indeed, Allamigeon, Benchimol, Gaubert, and Joswig~\cite{ABGJ14} have shown that a strongly polynomial time semi-algebraic pivoting rule 
in linear programming would solve mean payoff games in strongly polynomial time.

\paragraph*{Parity games}
It is most instructive in this context to think of parity games as a subclass of mean payoff games. 
More specifically, they are mean payoff games whose weights are of the form $(-n)^p$ for $p \in \set{1,\ldots,d}$,
and $d$ is the number of priorities.
The breakthrough result of Calude, Jain, Khoussainov, Li, and Stephan~\cite{CJKL017} 
was to construct a quasipolynomial time algorithm for solving parity games.
Following decades of exponential and subexponential algorithms, 
this very surprising result triggered further research: 
soon after two different quasipolynomial time algorithms
were constructed, by Jurdzi{\'n}ski and Lazi{\'c}~\cite{JL17}, and by Lehtinen~\cite{Leh18}.
They report almost the same complexity, which is roughly $n^{O(\log d)}$.

Boja{\'n}czyk and Czerwi{\'n}ski~\cite{BC18} introduced the separation question,
describing a family of algorithms for solving parity games based on separating automata,
and showed that the first quasipolynomial time algorithm yields a quasipolynomial solution to the separation question.
Later, Czerwi{\'n}ski, Daviaud, Fijalkow, Jurdzi{\'n}ski, Lazi{\'c}, and Parys~\cite{CDFJLP18} showed that 
the other two quasipolynomial algorithms also yield quasipolynomial solutions to the separation question,
leading to algorithms with the best complexity to date.
The main contribution of~\cite{CDFJLP18} is a quasipolynomial lower bound on the complexity of algorithms within this framework.

Since all quasipolynomial time algorithms for parity games can be presented using separating automata with matching complexity,
this motivates the study of algorithms for solving mean payoff games using separating automata.
Lower bounds for such algorithms provide evidence of the difficulty of extending to mean payoff games the recent ideas for solving parity games.

\paragraph*{Universal graphs}
The notion of universal graphs was developed in the context of parity games by Colcombet and Fijalkow~\cite{CF18}.
The main result is that \emph{separating automata and universal graphs are equivalent}.
More specifically, a separating automaton induces a universal graph of the same size, and conversely.
(We refer to Section~\ref{sec:separating_automata} for technical details.)
The point of view of universal graphs is a key step for understanding separating automata\footnote{The paper~\cite{CDFJLP18} uses the notion of universal trees, which came about before universal graphs. Universal trees are specific to the parity condition, unlike universal graphs which can be naturally instantiated to different conditions. The paper~\cite{CF18} shows that universal trees are exactly saturated universal graphs for parity.}.

The notion of universal graphs has been extensively studied in an unrelated context of labelling schemes,
that seek to assign a short bitstring (called a label) to every node of a graph, so that a query concerning
two nodes can be answered by looking at their corresponding labels alone.  
As a prime example, labelling nodes of an undirected graph for adjacency queries is known to be equivalent to constructing
a so-called induced-universal graph~\cite{KannanNR92}. 
Even though for some queries approaches based on an appropriately chosen notion of universal graphs are known to be suboptimal~\cite{FreedmanGNW17},
we also have examples in which they allow for a significantly simpler and more efficient solution~\cite{GawrychowskiKLP18}.

\paragraph*{Contributions}
This paper is concerned with the computational complexity of solving mean payoff games.
The best algorithm to date has complexity $O(nmN)$, where $n$ is the number of vertices, $m$ the number of edges, and $N$ the largest weight.
A fourth parameter is relevant: $k$, the number of distinct weights.

This paper gives upper and lower bounds on the complexity of algorithms solving mean payoff games using separating automata,
and in particular yields an improvement over the previously known algorithms.

We start by defining in Section~\ref{sec:separating_automata} separating automata and showing how they can be used to solve mean payoff games.
We show that to construct algorithms for solving mean payoff games it is enough to construct small separating automata.
Section~\ref{sec:universal_graphs} is devoted to the equivalence between separating automata and universal graphs, 
justifying that we focus in this paper on universal graphs.
We give a first very simple construction yielding an algorithm of complexity $O(nmN)$, 
matching the best algorithm to date.

The appeal and beauty of the universal graph technology is that from this point onwards, we do not talk about games anymore.
Indeed, the rest of the paper proves upper and lower bounds on the size of universal graphs.
Our main results are the following.

\paragraph*{Section~\ref{sec:largest_weight}: Universal graphs parametrised by the largest weight}
\begin{itemize}
	\item There exists an $(n,(-N,N))$-universal graph of size at most 
	$2 \left( nN - ((nN)^{1/n} - 1)^n \right)$, which is bounded by $2n (nN)^{1 - 1/n}$.
	\item All $(n,(-N,N))$-universal graphs have size at least $N^{1 - 1/n}$.
\end{itemize}
Since $n$ is polynomial in the size of the input, we can say that $n$ is ``small'', 
while $N$ is exponential in the size of the input when weights are given in binary, hence ``large''.
The multiplicative gap between upper and lower bound is bounded by $2n^2$, hence small.

As a consequence of the upper bound 
we obtain an algorithm for solving mean payoff games improving over the best algorithm to date:
the dependence in $N$ goes from $N$ to $N^{1 - 1/n}$, yielding the first algorithm running in sublinear time in $N$.
The lower bound shows that this dependence cannot be improved for algorithms using separating automata.

\paragraph*{Section~\ref{sec:number_weights}: Universal graphs parametrised by the number of weights}
\begin{itemize}
	\item For all $W \subseteq \Z$ of size~$k$, there exists an $(n,W)$-universal graph of size $n^k$.
	\item For all $k$, for $n$ large enough, 
	there exists $W \subseteq \Z$ of size~$k$ such that all $(n,W)$-universal graphs have size at least $\Omega(n^{k-2})$.
\end{itemize}
As a consequence, we obtain an algorithm solving mean payoff games of complexity $O(m n^k)$.
The lower bound shows that algorithms using separating automata cannot break the $O(n^{\Omega(k)})$ barrier,
and in particular do not have quasipolynomial complexity.

\section{Definitions}
We write $[i,j]$ for the interval $\set{i,i+1,\dots,j-1,j}$, and use parenthesis to exclude extremal values,
so $[i,j)$ is $\set{i,i+1,\dots,j-1}$.

\paragraph*{Graphs}
A labelled directed graph is given by a set $V$ of vertices, a set $E \subseteq V \times \Z \times V$ of edges,
and an initial vertex $\vinit$, so $(v,w,v') \in E$ means that there is an edge from $v$ to $v'$ labelled by $w \in \Z$.
We let $n$ denote the number of vertices and $m$ the number of edges.
The size of a graph $G$ is its number of vertices and denoted by $|G|$.
For $W \subseteq \Z$ a finite subset of the integers, 
we speak of an $(n,W)$-graph if it has size at most $n$ and all the weights labelling edges are in $W$.
Sometimes we use $W$-graph if the size is irrelevant.

A path $\pi$ is a (finite or infinite) sequence of consecutive triples $(v,w,v')$ in $E$.
Consecutive means that the third component of a triple in the sequence matches the first component of the next triple. 
We write $\pi = (v_0,w_0,v_1) (v_1,w_1,v_2) \cdots$
and let $\pi_{\le i}$ denote the prefix of $\pi$ of length $i$, meaning $\pi_{\le i} = (v_0,w_0,v_1) \cdots (v_{i-1},w_{i-1},v_i)$.
In the case of a finite path we write $\last(\pi)$ for the last vertex in $\pi$,
and talk of a path from $v_0$ to $\last(\pi)$.

We always assume that for all vertices $v$ there exists a path from $\vinit$ to $v$.

\paragraph*{Games}
A mean payoff game is given by a graph together with two sets $\VE$ and $\VA$ such that $V = \VE \uplus \VA$.
We often let $\G$ denote a mean payoff game.
The set $\VE$ is the set of vertices controlled by Eve (represented by circles in Fig~\ref{fig:mp_game})
and $\VA$ the set of vertices controlled by Adam (represented by squares in Fig~\ref{fig:mp_game}).
We speak of an $(n,W)$-mean payoff game if the underlying graph is an $(n,W)$-graph.

The interaction between the two players goes as follows.
A token is initially placed on the initial vertex $\vinit$, and the player who controls this vertex pushes the token along an edge, 
reaching a new vertex; the player who controls this new vertex takes over, 
and this interaction goes on.
To avoid dealing with sinks we will always assume that in a game for all vertices $v$ there exists $(v,w,v') \in E$,
hence the interaction results in an infinite path.
We say that an infinite path $\pi$ satisfies mean payoff if
\[
\liminf_\ell \frac{1}{\ell} \sum_{i = 0}^{\ell - 1} w_i \ge 0.
\]

\paragraph*{Strategies}
A strategy is a map $\sigma : E^* \to E$.
Note that we always take the point of view of Eve, so a strategy implicitely means a strategy of Eve.
We say that a path $\pi$ is consistent with the strategy $\sigma$ if
for all $i$, if $v_i \in \VE$, then 
\[
\sigma(\pi_{\le i}) = (v_i, w_i, v_{i+1}).
\]
A strategy $\sigma$ ensures mean payoff if all paths starting from $\vinit$ and consistent with $\sigma$ satisfy mean payoff.
Solving mean payoff games is the following decision problem:
\begin{framed}
\noindent\begin{tabular}{@{}rl}
\textbf{INPUT}: & a mean payoff game $\G$ \\
\textbf{OUTPUT}: & YES if Eve has a strategy ensuring mean payoff, NO otherwise.
\end{tabular}
\end{framed}

\paragraph*{Positional strategies}
Of special importance are positional strategies, which are given by 
\[
\sigma : V \to E.
\]
Such strategies make decisions only considering the current vertex. 
A positional strategy induces a strategy $\widehat{\sigma} : E^* \to E$
by $\widehat{\sigma}(\pi) = \sigma(\last(\pi))$, where by convention the last vertex of the empty path is $\vinit$.

\begin{theorem}[\cite{EhrenfeuchtMycielski79}]\label{thm:positional}
For all mean payoff games, if Eve has a strategy ensuring mean payoff, then she has a positional strategy ensuring mean payoff.
\end{theorem}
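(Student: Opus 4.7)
My plan is to follow the classical induction of Ehrenfeucht and Mycielski on Eve's ``degree of freedom'', namely the quantity $\sum_{v \in \VE} (\deg^+(v) - 1)$ counting the surplus of outgoing edges at Eve vertices. When this quantity is zero, every vertex in $\VE$ has a unique outgoing edge, so any strategy is trivially positional and the result is immediate; this gives the base case.

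For the induction step, I pick a vertex $v \in \VE$ with at least two outgoing edges and a distinguished edge $e$ leaving $v$, and form two games with strictly smaller degree of freedom: the game $\G^{+e}$ in which $e$ is the only outgoing edge at $v$, and the game $\G^{-e}$ in which $e$ is deleted. By the induction hypothesis, Eve has a positional winning strategy $\sigma_1$ on her full winning region $W_1$ in $\G^{+e}$ and a positional winning strategy $\sigma_2$ on her full winning region $W_2$ in $\G^{-e}$. The heart of the argument, and what I expect to be the main obstacle, is to prove that Eve's winning region $W$ in $\G$ coincides with $W_1 \cup W_2$. The inclusion $W_1 \cup W_2 \subseteq W$ is routine, since a strategy in either sub-game is a legal strategy in $\G$. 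For the converse, I suppose $u \in W \setminus W_2$; then Adam has a strategy $\tau$ in $\G^{-e}$ defeating every strategy of Eve from $u$, and against any winning strategy $\sigma$ of Eve in $\G$ from $u$, the response $\tau$ must eventually provoke the use of edge $e$ at $v$ (otherwise the play stays in $\G^{-e}$ and contradicts $\tau$). Using that mean payoff is \emph{prefix-independent}, I then argue that the vertex $v'$ reached just after $e$ lies in $W$, and iterate to splice together a winning strategy in $\G^{+e}$ from $u$. Making this splicing precise, while handling the fact that Adam can stall indefinitely before being forced into $e$, is the delicate point.

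Once the claim is established, I combine the two positional strategies into a single positional strategy $\sigma^\star$ on $W = W_1 \cup W_2$ by playing $\sigma_1$ on $W_1$ and $\sigma_2$ on $W_2 \setminus W_1$. A standard check shows that plays consistent with a positional winning strategy never leave their winning region, so plays starting in $W_1$ stay in $W_1$ under $\sigma^\star$, while a play starting in $W_2 \setminus W_1$ either stays in $W_2 \setminus W_1$ or enters $W_1$ and then remains there; in either case, prefix-independence of the mean payoff condition ensures that the play is winning. Applying this to the starting vertex $u = \vinit$ yields the theorem.
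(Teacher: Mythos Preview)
First, note that the paper does not prove Theorem~\ref{thm:positional}: it is cited from~\cite{EhrenfeuchtMycielski79} and used as a black box, so there is no in-paper argument to compare against.

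Your plan follows the ``edge removal'' induction, and the parts you spell out (base case; the inclusion $W_1\cup W_2\subseteq W$; combining $\sigma_1,\sigma_2$ into a positional $\sigma^\star$ on $W_1\cup W_2$ via prefix-independence) are fine. The problem is exactly where you flag it, but it is more than ``delicate'': the splicing you propose does not work for mean payoff. To show $u\in W\setminus W_2 \Rightarrow u\in W_1$ you need a winning strategy for Eve in $\G^{+e}$ against an \emph{arbitrary} Adam, not just against the particular $\tau$ you used to force the first use of $e$. The natural candidate---play a $\G$-winning strategy, and each time you are forced to take $e$ at $v$ restart from $v'$ with a fresh $\G$-winning strategy (which exists since $v'\in W$)---produces an infinite concatenation of \emph{finite prefixes} of winning plays. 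Prefix-independence says that suffixes of winning plays are winning; it says nothing about gluing prefixes. For mean payoff each prefix can carry an arbitrarily negative partial sum before its (never reached) recovery, so the glued play may have $\liminf$ average $-\infty$. Thus your ``iterate'' step, as stated, fails.

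Two standard repairs exist. One is to strengthen the induction hypothesis to positional determinacy for \emph{both} players and construct Adam's positional strategy on $V\setminus(W_1\cup W_2)$ directly; this already needs a case split on whether $v\in W_1$, $v\in W_2\setminus W_1$, or $v$ lies in neither, and in the last case neither $\tau_1$ nor $\tau_2$ alone traps the play, so additional work is required. The other repair---and this is what Ehrenfeucht and Mycielski actually do---is to abandon edge removal and go through the associated \emph{finite} first-cycle game (stop when a vertex repeats; the payoff is the mean weight of the closed cycle), where optimal positional strategies are obtained by a direct argument and then shown to be optimal in the infinite game as well.
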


We say that a graph satisfies mean payoff if all infinite paths satisfy mean payoff.
This can be easily characterised using cycles: 
a cycle is a path of finite length $\ell$ such that $v_0 = v_\ell$.
A cycle $(v_0,w_0,v_1) \cdots (v_{\ell-1},w_{\ell-1},v_0)$ is negative if its total weight is negative, meaning
\[
\sum_{i = 0}^{\ell - 1} w_i < 0.
\]

\begin{fact}\label{fact:graph_mp}
A graph satisfies mean payoff if and only if it does not contain any negative cycles.
\end{fact}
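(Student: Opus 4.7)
The plan is to prove both directions separately, starting with the easier (backward) direction and then handling the forward direction via a cycle-removal argument.

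For the backward direction, I would prove the contrapositive: if the graph contains a negative cycle, then it has an infinite path violating mean payoff. Given a negative cycle $C$ of total weight $-c < 0$ and length $\ell$, pick any vertex $v$ on $C$. By the standing assumption that every vertex is reachable from $\vinit$, fix a finite path $\pi_0$ from $\vinit$ to $v$, and let $W_0$ be its total weight. Concatenating $\pi_0$ with infinitely many traversals of $C$ yields an infinite path whose prefix of length $i \ell + |\pi_0|$ has total weight $W_0 - i c$. Dividing by the length and taking $\liminf$ gives $-c/\ell < 0$, so the graph does not satisfy mean payoff.

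For the forward direction, I would prove: if the graph contains no negative cycle, then every infinite path $\pi$ satisfies $\liminf_\ell \frac{1}{\ell} \sum_{i=0}^{\ell-1} w_i \geq 0$. The key lemma is a uniform lower bound on the weight of \emph{every} finite path. Let $N = \max_{(v,w,v') \in E} |w|$ and $n = |V|$. I claim that every finite path in the graph has total weight at least $-(n-1) N$. To see this, iteratively remove cycles: if a finite path $v_0 v_1 \cdots v_\ell$ repeats a vertex, say $v_i = v_j$ for $i < j$, then the sub-path $v_i \cdots v_j$ is a cycle, and by assumption its total weight is non-negative; deleting it yields a strictly shorter path from $v_0$ to $v_\ell$ with weight no larger than the original. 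Repeating this procedure terminates in a simple path of at most $n-1$ edges, whose total weight is at least $-(n-1)N$. Hence the original path's weight is at least $-(n-1)N$ as well.

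Applied to the prefix $\pi_{\le \ell}$ of $\pi$, this gives $\sum_{i=0}^{\ell-1} w_i \geq -(n-1)N$, so
\[
\frac{1}{\ell} \sum_{i=0}^{\ell-1} w_i \;\geq\; -\frac{(n-1)N}{\ell} \xrightarrow[\ell \to \infty]{} 0,
\]
and therefore $\liminf_\ell \frac{1}{\ell} \sum_{i=0}^{\ell-1} w_i \geq 0$, as required.

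The only delicate step is the cycle-removal argument in the forward direction; the main thing to be careful about is verifying that each removal produces a genuine path (endpoints preserved, consecutive edges), which is clear from the construction, and that termination occurs when no vertex repeats, leaving a simple path of bounded length. Everything else is direct.
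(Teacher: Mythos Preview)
The paper states this result as a Fact without proof, so there is no paper argument to compare against. Your proof is correct: the cycle-removal lemma yields a uniform lower bound $-(n-1)N$ on the total weight of every finite prefix when there are no negative cycles, from which $\liminf \ge 0$ follows immediately; and in the other direction, repeatedly traversing a negative cycle after reaching it from $\vinit$ (using the paper's standing reachability assumption) produces an infinite path with strictly negative $\liminf$. One cosmetic remark: your use of ``forward'' and ``backward'' is swapped relative to the order in which the biconditional is written in the paper (``satisfies mean payoff $\Leftrightarrow$ no negative cycles''), so what you call the backward direction is in fact the contrapositive of the forward implication, and vice versa; this does not affect the mathematics.
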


Given a mean payoff game $\G$ and a positional strategy $\sigma$, we let $\G[\sigma]$ denote the graph
obtained by restricting $\G$ to the moves prescribed by $\sigma$.
Formally, the set of vertices and edges are
\[
\begin{array}{lll}
V[\sigma] & = & \set{v \in V : \text{ there exists a path consistent with } \sigma \text{ from } \vinit \text{ to } v}\\
E[\sigma] & = & \set{(v,w,v') \in E : v \in \VA \text{ or } \left( v \in \VE \text{ and } \sigma(v) = (v,w,v') \right) }.
\end{array}
\]

\begin{fact}\label{fact:game_mp}
Let $\G$ be a mean payoff game and $\sigma$ a positional strategy.
Then the graph $\G[\sigma]$ satisfies mean payoff if and only if $\sigma$ ensures mean payoff.
\end{fact}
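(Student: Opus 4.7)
The plan is to prove both implications, leveraging Fact~\ref{fact:graph_mp} (a graph satisfies mean payoff iff it has no negative cycle) and the definition of $\G[\sigma]$.

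For the forward direction, I would show the contrapositive is almost immediate from the definition: if $\sigma$ does not ensure mean payoff, there is an infinite path $\pi$ starting from $\vinit$, consistent with $\sigma$, whose $\liminf$ of averages is negative. By induction on the prefix length, every edge of $\pi$ lies in $E[\sigma]$ (Eve's edges because they are prescribed by $\sigma$, Adam's edges by definition of $E[\sigma]$), and every vertex is reachable from $\vinit$ via a prefix of $\pi$, hence in $V[\sigma]$. So $\pi$ is an infinite path in $\G[\sigma]$ violating mean payoff, proving $\G[\sigma]$ does not satisfy mean payoff.

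For the converse, I would again argue the contrapositive, using Fact~\ref{fact:graph_mp}. Assume $\G[\sigma]$ does not satisfy mean payoff; then it contains a negative cycle $c = (v_0, w_0, v_1)\cdots(v_{\ell-1}, w_{\ell-1}, v_0)$ whose edges all lie in $E[\sigma]$. Since $v_0 \in V[\sigma]$, by definition there is a finite path $\pi'$ from $\vinit$ to $v_0$ consistent with $\sigma$. Concatenating $\pi'$ with infinitely many copies of $c$ yields an infinite path $\pi$ from $\vinit$. This path is consistent with $\sigma$: on vertices in $\VE$, the outgoing edge used is the unique one in $E[\sigma]$, which by the definition is $\sigma(v)$; since $\sigma$ is positional, consistency with $\sigma$ depends only on the current vertex. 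The total weight after $|\pi'| + k\ell$ steps grows like $k \cdot (\text{weight of } c)$, which is negative and linear in $k$, so the $\liminf$ of averages is strictly negative. Hence $\sigma$ does not ensure mean payoff.

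The main subtlety, which is worth stating explicitly, is the consistency of the looping path with the positional strategy $\sigma$: positionality is what allows us to concatenate $\pi'$ with many copies of $c$ without worrying about history. The rest is a routine unfolding of definitions and a straightforward averaging argument on the negative cycle.
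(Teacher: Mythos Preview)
The paper does not actually prove Fact~\ref{fact:game_mp}; it is stated without proof and treated as an elementary observation. Your argument is correct and is exactly the kind of routine unfolding of definitions the authors presumably had in mind: both directions via the contrapositive, the forward one by observing that any path consistent with $\sigma$ lives in $\G[\sigma]$, and the converse by invoking Fact~\ref{fact:graph_mp} to extract a negative cycle, reaching it via a $\sigma$-consistent prefix (guaranteed by the definition of $V[\sigma]$), and iterating. Your remark that positionality is precisely what makes the concatenation $\pi'\cdot c^\omega$ consistent with $\sigma$ is the only point worth highlighting, and you did.
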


\paragraph*{Safety games}
The general approach we are following in this paper is to reduce the problem of solving mean payoff games to solving 
a much simpler class of games called safety games.

A safety game is played over an unlabelled directed graph, meaning the set of edges is $E \subseteq V \times V$.
Formally, a safety game is given by a graph together with two sets $\VE$ and $\VA$ such that $V = \VE \uplus \VA$.
The definitions of paths and strategies are easily adapted from the case of mean payoff games.
The word safety refers to the winning condition, which is the simplest possible: 
we distinguish a set of vertices $T$ and say that a path satisfies safety if it does not contain any vertex in $T$.
A strategy $\sigma$ ensures safety if all paths starting from $\vinit$ and consistent with $\sigma$ satisfy safety.
Solving a safety game is determining whether Eve has a strategy ensuring safety.
The following lemma is folklore.

\begin{lemma}\label{lem:safety_games}
There exists an algorithm running in time $O(m)$ for solving safety games.
\end{lemma}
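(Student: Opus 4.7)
The plan is to describe the standard backward attractor computation and argue that with the right bookkeeping it runs in linear time. The key observation is that a safety game is equivalent to a reachability game from Adam's perspective: Adam wins exactly from the vertices in the \emph{attractor} of $T$, defined as the least set $L \supseteq T$ such that every $v \in \VA$ with an outgoing edge to $L$ is in $L$, and every $v \in \VE$ with \emph{all} outgoing edges going to $L$ is in $L$. By positional determinacy of reachability/safety games, Eve wins the safety game if and only if $\vinit \notin L$, and we just need to compute $L$ quickly.

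The algorithm maintains a queue of newly-discovered members of $L$, initialised with $T$. For each Eve vertex $v$ we precompute and store a counter $c(v)$ equal to its outdegree. We also assume (or build in $O(n+m)$ time) reverse adjacency lists so that, given a vertex, we can enumerate its in-neighbours. When a vertex $v$ is dequeued, we iterate over its in-edges $(u,v) \in E$: if $u \in \VA$ and $u \notin L$, we add $u$ to $L$ and enqueue it; if $u \in \VE$ and $u \notin L$, we decrement $c(u)$, and if $c(u)$ reaches $0$ we add $u$ to $L$ and enqueue it. We stop when the queue is empty, then return YES iff $\vinit \notin L$.

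For correctness, one shows by induction on the order of insertion into $L$ that every vertex added is indeed in the attractor of $T$, and conversely that every vertex in the attractor is eventually inserted (by induction on the rank at which it enters the attractor). For the complexity, each vertex is enqueued and dequeued at most once, and when it is dequeued we spend time proportional to its in-degree. Summing over all vertices, the total work is $O(n+m)$; since the excerpt assumes that every vertex is reachable from $\vinit$ we have $n \le m+1$, so this is $O(m)$ as required.

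The main conceptual subtlety, and the only step that is not entirely routine, is the amortised argument that maintaining the counters $c(v)$ for Eve vertices is enough to detect ``all outgoing edges are in $L$'' without ever re-scanning outgoing edges; once this is set up, the bound follows because we charge one unit of work to each edge at the moment its head enters $L$.
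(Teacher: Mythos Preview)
Your argument is correct: the backward attractor computation with out-degree counters for Eve vertices is exactly the standard linear-time algorithm for reachability/safety games, and your justification that $O(n+m)=O(m)$ via the standing reachability assumption is sound. The paper itself gives no proof of this lemma---it is stated as folklore---so your write-up supplies the missing details rather than deviating from anything in the text.
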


Throughout the paper we fix $n$ a natural number and $W$ a finite set of integers.

\label{sec:mean_payoff}

\section{Solving mean payoff games using separating automata}

The notion of separating automata was introduced in~\cite{BC18} as a framework for constructing algorithms solving parity games, 
and further studied in~\cite{CDFJLP18}.

\subsection*{Separating automata}
We consider deterministic safety automata over the alphabet $W$:
such an automaton is given by a finite set of states $Q$, a special rejecting state $\bot$, an initial state $\qinit$, 
and a transition function $\delta : Q \times W \to Q$.
For a word $w = w_0 w_1 \cdots \in W^\omega$, the run over $w$ is the word $\rho = q_0 q_1 \cdots \in Q^\omega$
such that $q_0 = \qinit$ and for all $i \in \N$, we have $q_{i+1} = \delta(q_i,w_i)$.
Safety means that a run is accepting if it does not contain $\bot$, and an infinite word $w$ is accepted if the run $\rho$ over~$w$ is accepting.
The language recognised by an automaton is the set of accepted words.
The size of an automaton is its number of states (not counting the rejecting state $\bot$).

We let 
\[
\MP{W} = \set{w : \liminf_\ell \frac{1}{\ell} \sum_{i = 0}^{\ell - 1} w_i \ge 0} \subseteq W^\omega.
\]
Recall that a path in a graph is a sequence of edges, and since edges carry a label, a path naturally induces a sequence of labels.
We say that a path is accepted or rejected by an automaton; this is an abuse of language
since what the automaton reads is only the labels of the corresponding path.
For a graph~$G$ we let $\text{Paths}(G) \subseteq W^\omega$ denote the set of sequences induced by paths in $G$.
We let
\[
\MPn{W}{n} = \bigcup \set{ \text{Paths}(G) : \begin{array}{c} G \text{ is an } (n,W)\text{-graph} \\ \text{satisfying mean payoff}\end{array}}.
\]
\begin{definition}
An automaton is $(n,W)$-\textit{separating} if the language $L$ it recognises satisfies 
\[
\MPn{W}{n} \subseteq L \subseteq \MP{W}.
\]
\end{definition}

\subsection*{Reduction to safety games}

The definition of separating automata was designed for the following lemma to hold.

\begin{lemma}\label{lem:reduction1}
Let $\G$ be a $(n,W)$-mean payoff game and $\A$ an $(n,W)$-separating automaton recognising the language $L$.
Then Eve has a strategy ensuring mean payoff if and only if she has a strategy ensuring $L$.
\end{lemma}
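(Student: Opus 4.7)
}

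The plan is to handle the two implications separately, using positional determinacy (Theorem~\ref{thm:positional}) in one direction and the chain of inclusions $\MPn{W}{n} \subseteq L \subseteq \MP{W}$ in the other. I do not expect either direction to be a real obstacle; the main care is in unfolding ``ensures $L$'' so that one can feed paths straight into the inclusions defining separation.

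For the easy direction, assume Eve has a strategy $\sigma$ ensuring $L$. By definition, every infinite path from $\vinit$ that is consistent with $\sigma$ has its sequence of labels in $L$. Since $L \subseteq \MP{W}$, every such label sequence lies in $\MP{W}$, which is exactly the mean payoff condition. Hence the same $\sigma$ ensures mean payoff.

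For the other direction, assume Eve has a strategy ensuring mean payoff. First I would apply Theorem~\ref{thm:positional} to obtain a positional strategy $\sigma : V \to E$ ensuring mean payoff. Then I would look at the restricted graph $\G[\sigma]$: by Fact~\ref{fact:game_mp}, $\G[\sigma]$ satisfies mean payoff, and since it is obtained by restricting $\G$, it is an $(n,W)$-graph. Therefore $\text{Paths}(\G[\sigma]) \subseteq \MPn{W}{n}$, and by the separation hypothesis this is contained in $L$. It remains to observe that any infinite path from $\vinit$ consistent with $\sigma$ in $\G$ is, by construction of $\G[\sigma]$, a path of $\G[\sigma]$ starting at $\vinit$, so its label sequence lies in $L$. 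Thus $\sigma$ ensures $L$.

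The one subtlety worth flagging is that ``ensures $L$'' is stated for general strategies $E^* \to E$, whereas in the hard direction I would produce a positional strategy. This is harmless because a positional $\sigma$ induces $\widehat\sigma : E^* \to E$ with the same set of consistent paths, as noted just before Theorem~\ref{thm:positional}. With that observation the argument closes, and the whole proof reduces to a two-line application of the sandwich $\MPn{W}{n} \subseteq L \subseteq \MP{W}$ together with positional determinacy.
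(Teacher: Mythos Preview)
Your proposal is correct and matches the paper's proof essentially line for line: both directions are handled exactly as you describe, using positional determinacy (Theorem~\ref{thm:positional}) together with Fact~\ref{fact:game_mp} for one implication and the inclusion $L \subseteq \MP{W}$ for the other. Your added remarks that $\G[\sigma]$ is an $(n,W)$-graph and that a positional strategy induces a general one are implicit in the paper and only make the argument more explicit.
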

\begin{proof}
Let us first assume that Eve has a strategy $\sigma$ ensuring mean payoff.
It can be chosen positional thanks to Theorem~\ref{thm:positional}.
Then $\G[\sigma]$ satisfies mean payoff thanks to Fact~\ref{fact:game_mp}.
Since $\MPn{W}{n} \subseteq L$ this implies that $\text{Paths}(\G[\sigma]) \subseteq L$.
In other words, all paths consistent with $\sigma$ from $\vinit$ are in $L$, or equivalently $\sigma$ ensures $L$.
Conversely, assume that Eve has a strategy $\sigma$ ensuring $L$.
Since $L \subseteq \MP{W}$, the strategy $\sigma$ also ensures mean payoff.
\end{proof}

It follows that solving $(n,W)$-mean payoff games is equivalent to solving games whose condition is given by a $(n,W)$-separating automaton.
Since separating automata are safety deterministic automata, the latter can naturally be reduced to safety games, as we now explain.

We reduce an $(n,W)$-mean payoff game $\G$ to a safety game on $\G \times \A$.
We let $(V,E)$ denote the underlying graph of $\G$ and write $\A = (Q,\qinit,\delta)$.
In $\G \times \A$ the set of vertices and edges are
\[
\begin{array}{lll}
\VE' & = & \VE \times Q, \\
\VA' & = & \VA \times Q, \\
E'   & = & \set{\left( (v,q),\ (v',\delta(q,w)) \right) : (v,w,v') \in E}.
\end{array}
\]
The initial vertex is $(\vinit,\qinit)$.
To define the safety condition we choose $V \times \set{\bot}$ to be the losing vertices: 
a path satisfies safety if it does not contain any vertex from $V \times \set{\bot}$.

\begin{lemma}\label{lem:reduction2}
Let $\G$ be an $(n,W)$-mean payoff game and $\A$ an $(n,W)$-universal graph.
Then Eve has a strategy in $\G$ ensuring $L$ if and only if 
she has a strategy in $\G \times \A$ ensuring safety.
\end{lemma}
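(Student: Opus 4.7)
The plan is to exploit the deterministic nature of $\A$ to set up a bijection between infinite paths from $\vinit$ in $\G$ and infinite paths from $(\vinit, \qinit)$ in $\G \times \A$, and then transport strategies across this bijection.

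First I would make the path correspondence precise. Given an infinite path $\pi = (v_0, w_0, v_1)(v_1, w_1, v_2)\cdots$ in $\G$ starting at $v_0 = \vinit$, set $q_0 = \qinit$ and inductively $q_{i+1} = \delta(q_i, w_i)$; the sequence $\pi' = ((v_0,q_0),(v_1,q_1))((v_1,q_1),(v_2,q_2))\cdots$ is then an infinite path in $\G \times \A$ by construction of $E'$. Conversely, the projection to the first component of any path in $\G \times \A$ from $(\vinit, \qinit)$ gives a path in $\G$, and determinism of $\delta$ guarantees these two operations are mutually inverse. The key observation is that $q_0 q_1 \cdots$ is exactly the run of $\A$ on the label sequence of $\pi$, so $\pi \in L$ if and only if no $q_i$ equals $\bot$, which is exactly the safety condition for $\pi'$.

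Next I would lift strategies across this bijection. Given a strategy $\sigma : E^* \to E$ for Eve in $\G$ ensuring $L$, define $\sigma' : (E')^* \to E'$ by first projecting the history in $\G \times \A$ down to $E^*$, applying $\sigma$ to obtain an edge $(v, w, v') \in E$, and then lifting it to the unique edge $((v, q), (v', \delta(q,w))) \in E'$ where $q$ is the current automaton state (read off the last vertex of the history). Conversely, given $\sigma'$ in $\G \times \A$ ensuring safety, define $\sigma$ in $\G$ by reconstructing the automaton state from the label sequence of the history (using $\delta$ and starting from $\qinit$), applying $\sigma'$ to the lifted history, and projecting the result back to $E$. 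It is immediate from the path correspondence that a path in $\G$ from $\vinit$ is consistent with $\sigma$ if and only if its lift is consistent with $\sigma'$.

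Combining these two ingredients closes the equivalence: paths consistent with $\sigma$ all lie in $L$ iff their lifts consistent with $\sigma'$ all satisfy safety. I do not foresee a real obstacle; the only point requiring mild care is checking that the reconstruction of the automaton state in the backward direction is well-defined, which is immediate from determinism of $\A$.
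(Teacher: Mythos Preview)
Your proposal is correct and follows essentially the same approach as the paper: both set up the bijection between paths in $\G$ and in $\G \times \A$ via the deterministic run of $\A$, observe that membership in $L$ corresponds exactly to safety under this bijection, and transport strategies across it. Your version is simply more explicit about how the strategy translation is carried out in each direction.
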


\begin{proof}
We construct a bijection between paths in $\G$ and in $\G \times \A$.
The path $\pi$ in $\G$ 
induces the path $\pi' = ((v_0,q_0),w_0,(v_1,q_1))\ ((v_1,q_1),w_1,(v_2,q_2)) \cdots$ in $\G \times \A$
where $q_i = \delta(\qinit,w_0 w_1 \cdots w_{i-1})$.
This bijection induces a one-to-one correspondence between strategies in $\G$ and in $\G \times \A$.
Since $\pi$ is in $L$ if and only if $\pi'$ satisfies safety, 
strategies in $\G$ ensuring $L$ correspond to strategies in $\G \times \A$ ensuring safety, and vice-versa.
\end{proof}

Combining Lemma~\ref{lem:reduction1} and Lemma~\ref{lem:reduction2} yields an algorithm for solving mean payoff games whose complexity is proportional to the size of $\A$.

\begin{theorem}\label{thm:reduction}
Given an $(n,W)$-separating automaton $\A$, 
we can construct an algorithm solving $(n,W)$-mean payoff games of complexity $O(m \cdot |\A|)$.
\end{theorem}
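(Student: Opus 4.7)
The plan is to chain Lemma~\ref{lem:reduction1} and Lemma~\ref{lem:reduction2} with the safety-game solver of Lemma~\ref{lem:safety_games}. First I would invoke Lemma~\ref{lem:reduction1} to replace the original mean payoff question by the question of whether Eve has a strategy in $\G$ ensuring the language $L$ recognised by $\A$; this is where the separating hypothesis $\MPn{W}{n} \subseteq L \subseteq \MP{W}$ is used. Then I would invoke Lemma~\ref{lem:reduction2}, which further reduces that question to a safety game on the product $\G \times \A$ with losing vertices $V \times \{\bot\}$. Finally, I would run the algorithm of Lemma~\ref{lem:safety_games} on this safety game.

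The main content of the proof is the complexity bookkeeping on the product $\G \times \A$. The set of vertices is $V \times Q$, of size $n \cdot |\A|$. More importantly, for each edge $(v,w,v') \in E$ and each state $q \in Q$ the product contains exactly one edge from $(v,q)$ to $(v',\delta(q,w))$, so the number of edges in $\G \times \A$ is exactly $m \cdot |\A|$. (Transitions into the rejecting sink $\bot$ either do not need to be materialised or can be collapsed to a single losing vertex; either way the edge count is $O(m \cdot |\A|)$.) Applying the $O(\text{edges})$ safety-game algorithm from Lemma~\ref{lem:safety_games} then gives total running time $O(m \cdot |\A|)$.

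There is essentially no obstacle: the non-trivial work has already been done inside Lemma~\ref{lem:reduction1} (correctness of the reduction via Theorem~\ref{thm:positional} and Fact~\ref{fact:game_mp}) and Lemma~\ref{lem:reduction2} (the bijection between plays in $\G$ and in $\G \times \A$). The only thing to be careful about is that the product is built on the fly so that its construction itself fits within the claimed $O(m \cdot |\A|)$ budget, which is immediate since each of the $m \cdot |\A|$ product edges is produced in constant time from a description of $\G$ and from the transition function $\delta$ of $\A$.
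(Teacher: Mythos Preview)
Your proposal is correct and matches the paper's approach exactly: the paper's proof is the single sentence ``The algorithm is simply to construct the safety game $\G \times \A$ and then to solve it in linear time using Lemma~\ref{lem:safety_games}'', and your write-up is simply a more detailed expansion of that sentence, spelling out the edge count $m \cdot |\A|$ of the product that the paper leaves implicit.
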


The algorithm is simply to construct the safety game $\G \times \A$ and then to solve it in linear time using Lemma~\ref{lem:safety_games}.

\label{sec:separating_automata}

\section{Universal graphs}
The technical objective of this paper is to give upper and lower bounds on the size of separating automata.
To gain some insights into the structure of separating automata, we use the equivalent notion of universal graphs.

\subsection*{Definition of universal graphs}
For two graphs $G,G'$, a homomorphism $\phi : G \to G'$
maps the vertices of $G$ to the vertices of $G'$ such that
\[
(v,w,v') \in E \ \implies\ (\phi(v),w,\phi(v')) \in E'.
\]
Homomorphisms do not have any requirement about the initial vertices.

\begin{definition}
A graph is $(n,W)$-\textit{universal} if it satisfies mean payoff and 
any $(n,W)$-graph satisfying mean payoff can be mapped homomorphically into it.
\end{definition}

It is not clear at this point that there exists a universal graph, let alone a small one.
Indeed, the definition creates a tension between ``satisfying mean payoff'', restricting the cycles,
and ``homomorphically map any $(n,W)$-graph satisfying mean payoff'', implying that the graph witnesses varied behaviours.
A simple construction of an $(n,W)$-universal graph is to take the disjoint union of all $(n,W)$-graphs satisfying mean payoff.
Indeed, up to renaming of vertices there are finitely many such graphs, so this yields a very large but finite $(n,W)$-universal graph. 
We will show that there are much smaller ones.

A $W$-linear graph is given by a finite subset $A$ of the integers.
The set of vertices is $A$ and for any $v,v' \in A$ and $w \in W$ there is an edge from $v$ to $v'$ labelled by $w$ if $v' - v \le w$.
We drop $W$ when it is clear from the context, or irrelevant.
Observe that linear graphs satisfy mean payoff, which follows from the fact that they do not contain negative cycles.

Given a graph with no negative cycles, the distance from a vertex $v$ to a vertex $v'$ is the smallest sum of the weights along a path from $v$ to~$v'$.

\begin{lemma}\label{lem:naive}
Let $G$ be a $W$-graph satisfying mean payoff.
We let $\Lin(G)$ define the $W$-linear graph
\[
\set{ \dist(\vinit,v) : v \in V }.
\]
Then $G$ homomorphically maps into $\Lin(G)$.
\end{lemma}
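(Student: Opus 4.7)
The natural candidate is the map $\phi : V \to \Lin(G)$ defined by $\phi(v) = \dist(\vinit, v)$, and the plan is to verify (i) that $\phi$ is well-defined, and (ii) that it satisfies the homomorphism condition.

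For well-definedness, I need to argue that $\dist(\vinit, v)$ is a finite integer for every $v$. By assumption there is at least one path from $\vinit$ to $v$, so the set of path weights is nonempty. Since $G$ satisfies mean payoff, Fact~\ref{fact:graph_mp} tells us it has no negative cycles; hence any cycle inserted in a path contributes a nonnegative amount, and therefore the infimum over all paths from $\vinit$ to $v$ is attained by a simple path (of which there are only finitely many). This makes $\dist(\vinit, v)$ a well-defined integer, so $\phi(v)$ lies in the vertex set of $\Lin(G)$.

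For the homomorphism property, take any edge $(v, w, v') \in E$. Concatenating a shortest path from $\vinit$ to $v$ (of weight $\dist(\vinit, v)$) with the edge $(v, w, v')$ yields a path from $\vinit$ to $v'$ of weight $\dist(\vinit, v) + w$. By minimality of $\dist(\vinit, v')$, we obtain
\[
\dist(\vinit, v') \le \dist(\vinit, v) + w, \quad \text{i.e.,} \quad \phi(v') - \phi(v) \le w.
\]
By the definition of the $W$-linear graph on $A = \{\dist(\vinit, v) : v \in V\}$, this is exactly the condition for $(\phi(v), w, \phi(v'))$ to be an edge of $\Lin(G)$, so $\phi$ is a homomorphism.

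There is no genuine obstacle here; the only subtlety worth stating carefully is the use of Fact~\ref{fact:graph_mp} to guarantee that distances are finite and integer-valued (otherwise the target vertex set $A$ would not make sense). Everything else is the familiar triangle-inequality argument for shortest paths.
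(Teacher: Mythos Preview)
Your proof is correct and follows exactly the same approach as the paper: define $\phi(v)=\dist(\vinit,v)$, use the absence of negative cycles (Fact~\ref{fact:graph_mp}) together with reachability from $\vinit$ for well-definedness, and then invoke the triangle inequality $\dist(\vinit,v')\le\dist(\vinit,v)+w$ to conclude that $\phi$ is a homomorphism. Your write-up merely makes explicit the reason the distance is attained (reduction to simple paths), which the paper leaves implicit.
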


\begin{proof}
We define $\phi$ by mapping a vertex $v$ of $G$ to $\dist(\vinit,v)$.
Since $G$ satisfies mean payoff it does not contain negative cycles, and by assumption all vertices are reachable from $\vinit$, 
so $\phi$ is well defined.
We claim that $\phi$ is a homomorphism from $G$ to $\Lin(G)$, which follows from the triangle inequality: 
if $(v,w,v') \in E$, then
$\dist(\vinit,v') \le \dist(\vinit,v) + w$,
implying $\phi(v') - \phi(v) \le w$.
\end{proof}

This very simple lemma has two important consequences.

\begin{corollary}\label{cor:linearisation}
For every $(n,W)$-universal graph, there exists a linear $(n,W)$-universal graph of the same size.
\end{corollary}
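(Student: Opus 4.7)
The plan is to apply Lemma~\ref{lem:naive} directly to a given universal graph and compose homomorphisms. Suppose $U$ is an $(n,W)$-universal graph. Since $U$ itself satisfies mean payoff (being universal) and all vertices are reachable from $\vinit$ by the global assumption, Lemma~\ref{lem:naive} applies to~$U$ and produces a linear graph $\Lin(U)$ on the vertex set $\set{\dist(\vinit,v) : v \in V(U)} \subseteq \Z$, together with a homomorphism $\psi : U \to \Lin(U)$.

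Next I would verify the two required properties of $\Lin(U)$. First, $\Lin(U)$ satisfies mean payoff because it is a linear graph, and linear graphs were already observed (just after their definition) to contain no negative cycles. Second, $\Lin(U)$ is universal: given any $(n,W)$-graph $G$ satisfying mean payoff, universality of $U$ provides a homomorphism $\phi : G \to U$, and composing with $\psi$ yields a homomorphism $\psi \circ \phi : G \to \Lin(U)$. Since homomorphisms put no constraint on initial vertices, this composition is well defined.

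For the size, note that $|\Lin(U)| \le |U|$ since distinct vertices of $U$ may share the same distance from $\vinit$ and hence collapse to the same integer. If the inequality is strict, one can simply pad the vertex set of $\Lin(U)$ with extra integers chosen very large so that no new negative cycles are created (a linear graph on any finite $A \subseteq \Z$ has no negative cycles regardless of $A$), until the size reaches exactly $|U|$. The padded graph remains linear and mean-payoff-satisfying, and still receives every $(n,W)$-graph satisfying mean payoff via the same composition as above, since the original image already lies in the unpadded part.

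This argument is essentially immediate from Lemma~\ref{lem:naive} together with closure of homomorphisms under composition; I do not anticipate a genuine obstacle. The only mild subtlety is the size-matching step, handled by the harmless padding above.
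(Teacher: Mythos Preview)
Your proposal is correct and follows essentially the same route as the paper: apply Lemma~\ref{lem:naive} to $\U$ itself, then compose homomorphisms to transfer universality to $\Lin(\U)$. The only difference is that the paper is content with ``no larger than'' in the proof while writing ``the same size'' in the statement, whereas you add the harmless padding step to make the sizes match exactly; this is extra care rather than a different idea.
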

\begin{proof}
Let $\U$ be an $(n,W)$-universal graph, then $\U$ maps homomorphically into the linear graph $\Lin(\U)$,
and the size of $\Lin(\U)$ (meaning the number of vertices) is no larger than the size of $\U$.
This implies that $\Lin(\U)$ is $(n,W)$-universal, since any $(n,W)$-graph $G$ satisfying mean payoff homomorphically maps into $\U$,
which composed with the homomorphism into $\Lin(\U)$ given by Lemma~\ref{lem:naive}
yields a homomorphism from $G$ to $\Lin(\U)$.
\end{proof}

\begin{corollary}\label{cor:naive}
The $(-N,N)$-linear graph $(-nN,nN)$ is $(n,(-N,N))$-universal.
\end{corollary}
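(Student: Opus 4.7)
The plan is to build on Lemma~\ref{lem:naive}, which already produces a homomorphism from any graph satisfying mean payoff into a linear graph. Given an arbitrary $(n,(-N,N))$-graph $G$ satisfying mean payoff, I would apply that lemma to obtain a homomorphism $\phi : G \to \Lin(G)$ sending $v$ to $\dist(\vinit,v)$. The goal is then to argue that $\Lin(G)$ sits inside the $(-N,N)$-linear graph on the vertex set $(-nN,nN)$ as a subgraph, so that $\phi$ itself witnesses the universality of the latter.

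The one substantive step is the bound on distances. Because $G$ satisfies mean payoff it contains no negative cycles (Fact~\ref{fact:graph_mp}), and I would use the standard observation that in a graph without negative cycles every shortest path can be taken to be simple: removing a cycle, whose weight is nonnegative, cannot increase the total weight of a path. Hence $\dist(\vinit,v)$ is realised by a path of length at most $n-1$, and since every edge weight lies in $(-N,N)$ and so has absolute value at most $N-1$, we get $|\dist(\vinit,v)| \le (n-1)(N-1) < nN$. This places every vertex of $\Lin(G)$ inside $(-nN,nN)$.

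The final piece is the trivial remark that if $A \subseteq A'$ then the $W$-linear graph on $A$ is a subgraph of the $W$-linear graph on $A'$, since they use the same edge condition $v' - v \le w$. Consequently the homomorphism $\phi$ from $G$ into $\Lin(G)$ is also a homomorphism from $G$ into the linear graph on $(-nN,nN)$. The latter satisfies mean payoff because, as noted after the definition, every linear graph does. Together these give universality.

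I do not expect a real obstacle: modulo Lemma~\ref{lem:naive} the argument is just bookkeeping, and the only point requiring a moment of care is the simple-shortest-paths observation used to bound $|\dist(\vinit,v)|$ by $(n-1)(N-1)$.
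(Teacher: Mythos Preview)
Your proposal is correct and follows essentially the same route as the paper: the paper's proof simply asserts that $\Lin(G)\subseteq(-nN,nN)$ and invokes Lemma~\ref{lem:naive}, while you additionally spell out the simple-path argument behind that inclusion. The approaches coincide.
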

\begin{proof}
Let $G$ be an $(n,(-N,N))$-graph satisfying mean payoff. 
We observe that $\Lin(G) \subseteq (-nN,nN)$, 
so Lemma~\ref{lem:naive} implies that $(-nN,nN)$ is $(n,(-N,N))$-universal.
\end{proof}

We state here a simple fact that we will use several times later on about homomorphisms into linear graphs.
\begin{fact}\label{fact:equality}
Let $G$ be a graph and $A$ a linear graph, $\phi : G \to A$ a homomorphism. 

Consider a cycle $(v_0,w_0,v_1) \cdots (v_{\ell-1},w_{\ell-1},v_0)$ in $G$ of total weight $0$.
Then for $i \in [0,\ell)$, we have 
$\phi(v_{i+1}) - \phi(v_i) = w_i$,
where by convention $v_\ell = v_0$.
\end{fact}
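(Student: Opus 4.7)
The plan is to leverage the defining property of a linear graph, namely that an edge from $a$ to $a'$ labelled $w$ exists iff $a' - a \le w$, so that a homomorphism $\phi : G \to A$ automatically turns every edge $(v, w, v') \in E$ into the inequality $\phi(v') - \phi(v) \le w$. This gives one inequality per edge of the cycle, and the rest is a telescoping argument.

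First I would apply the homomorphism condition to each edge of the cycle, obtaining $\phi(v_{i+1}) - \phi(v_i) \le w_i$ for every $i \in [0,\ell)$, where $v_\ell = v_0$ by convention. Next I would sum these $\ell$ inequalities. On the left-hand side the sum telescopes to $\phi(v_\ell) - \phi(v_0) = 0$, and on the right-hand side it equals $\sum_{i=0}^{\ell-1} w_i$, which is $0$ by the hypothesis that the cycle has total weight zero. Hence $\sum_{i=0}^{\ell-1} (w_i - (\phi(v_{i+1}) - \phi(v_i))) = 0$, a sum of non-negative terms equal to zero.

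Finally I would conclude that every summand must vanish, yielding $\phi(v_{i+1}) - \phi(v_i) = w_i$ for each $i$. I do not expect any real obstacle here: the proof is a one-line telescoping argument combined with the observation that a sum of non-negative reals equal to zero forces each term to be zero. The only subtlety worth flagging is the cyclic indexing convention $v_\ell = v_0$, which is what makes the telescoping sum collapse to $0$ rather than to $\phi(v_\ell) - \phi(v_0)$ with distinct endpoints.
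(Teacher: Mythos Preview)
Your proposal is correct and matches the paper's own proof essentially line for line: the paper also applies the homomorphism inequality $\phi(v_{i+1}) - \phi(v_i) \le w_i$ to each edge, sums, observes both sides equal $0$, and concludes that no inequality can be strict. The only cosmetic difference is that the paper phrases the last step as a contradiction (``if one inequality were strict then $0 < 0$'') whereas you phrase it as ``a sum of non-negative terms equal to zero forces each term to vanish''; these are the same argument.
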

\begin{proof}
By definition of the homomorphism using the edges $(v_i,w_i,v_{i+1})$ for $i \in [0,\ell)$
\[
\begin{array}{lll}
\phi(v_1) - \phi(v_0) & \le & w_0 \\
\phi(v_2) - \phi(v_1) & \le & w_1 \\
& \ \vdots & \\
\phi(v_0) - \phi(v_{\ell-1}) & \le & w_{\ell-1}.
\end{array}
\]
Assume towards contradiction that one of these inequalities is strict. 
Summing all of them yields $0$ on both sides (because the cycle has total weight $0$), so this would imply $0 < 0$, a contradiction.
Hence all inequalities are indeed equalities.
\end{proof}

\subsection*{Equivalence between separating automata and universal graphs}
We now prove that separating automata and universal graphs are two sides of the same coin.
The equivalence was proved for parity games in~\cite{CF18}, using universal trees as an intermediate step.
We reformulate it here in the context of mean payoff games.
A closer inspection indeed reveals that the equivalence holds for any winning condition which is positionally determined for Eve.
\begin{theorem}[\cite{CF18}]\label{thm:equivalence}
\hfill
\begin{itemize}
	\item An $(n,W)$-universal graph induces an $(n,W)$-separating automaton of the same size.
	\item An $(n,W)$-separating automaton induces an $(n,W)$-universal graph of the same size.
\end{itemize}
\end{theorem}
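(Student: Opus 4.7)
I would treat the two implications separately. For the first one, starting from an $(n,W)$-universal graph $\U$, I would invoke Corollary~\ref{cor:linearisation} to reduce to the case where $\U$ is linear, so $\U$ is given by a finite set $A \subseteq \Z$ with $|\U| = |A|$. I would then construct a deterministic safety automaton $\A$ with state set $A \cup \set{\bot}$, initial state $\qinit = \max A$, and transition $\delta(a,w) = \max \set{a' \in A : a' \le a + w}$, defaulting to $\bot$ when the set is empty. The size of $\A$ matches $|A| = |\U|$. The inclusion $\MPn{W}{n} \subseteq L(\A)$ is verified by fixing a path $\pi$ in an $(n,W)$-graph $G$ satisfying mean payoff, taking the universality homomorphism $\phi : G \to \U$, and then showing by induction on the length of the run that the state of $\A$ after each prefix majorises the image of the corresponding vertex under $\phi$; this keeps the run away from $\bot$. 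The reverse inclusion $L(\A) \subseteq \MP{W}$ follows from $q_{i+1} \le q_i + w_i$ together with boundedness of $A$: the partial sum $\sum_{i < \ell} w_i$ is at least $q_\ell - q_0$, which is bounded, so the mean is non-negative in the limit.

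For the second implication, starting from an $(n,W)$-separating automaton $\A = (Q, \qinit, \delta)$ with rejecting state $\bot$, write $L(\A_q)$ for the language accepted when $\A$ is initialised at $q$. I would build $\U$ on the set $V(\U)$ of states reachable from $\qinit$ in $\A$ (so $|V(\U)| \le |Q|$), with edges $(q,w,q')$ whenever $\delta(q,w) \neq \bot$ and $L(\A_{q'}) \subseteq L(\A_{\delta(q,w)})$, taking $\qinit$ as initial vertex. For universality, given any $(n,W)$-graph $G$ satisfying mean payoff, I would define $\phi(v)$ to be a state that is minimal, in the preorder $q \preceq q' \Leftrightarrow L(\A_q) \subseteq L(\A_{q'})$, among $\set{\delta^*(\qinit, u) : u \text{ labels a path from } \vinit \text{ to } v \text{ in } G}$. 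This set is non-empty and contains only non-$\bot$ states because every such finite path extends in $G$ to an infinite path, which lies in $\MPn{W}{n} \subseteq L(\A)$. For any edge $(v,w,v') \in E(G)$, the state $\delta(\phi(v), w)$ lies in the candidate set at $v'$, so minimality forces $\phi(v') \preceq \delta(\phi(v), w)$, which is precisely the edge condition defining $E(\U)$.

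The main obstacle is proving that $\U$ satisfies mean payoff. Because edges in $\U$ are defined by language inclusions rather than literal transitions of $\A$, a negative cycle $q_0 \xrightarrow{w_0} \cdots \xrightarrow{w_{\ell-1}} q_0$ in $\U$ does not directly correspond to a cycle in the transition graph of $\A$. The plan is to lift this cycle by following the genuine transitions $p_{i+1} = \delta(p_i, w_i)$ starting from $p_0 = q_0$, to use the chained inclusions at each step to show that $p_\ell$ still simulates $q_0$, and thereby to repeat the cycle arbitrarily many times. This yields an infinite accepted word from $q_0$, and prepending a path from $\qinit$ to $q_0$ in $\A$ produces an accepted word of negative mean payoff, contradicting $L(\A) \subseteq \MP{W}$. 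Once this step is carried out, together with the minor padding needed to reach exactly $|Q|$ vertices, the equivalence is established.
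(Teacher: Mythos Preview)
Your first implication is correct and is essentially the paper's construction: linearise $\U$, take the states to be the elements of $A$, start at $\max A$, and let $\delta(a,w)$ be the greatest element $\le a+w$. The two containments are argued the same way.

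For the second implication you take a genuinely different route, and the part you flag as the ``main obstacle'' (that $\U$ satisfies mean payoff) actually goes through: language inclusion is preserved by transitions in a deterministic safety automaton, so from a putative negative cycle $q_0\xrightarrow{w_0}\cdots\xrightarrow{w_{\ell-1}}q_0$ in your $\U$ one maintains $L(\A_{q_{i\bmod\ell}})\subseteq L(\A_{p_i})$ along the genuine run $p_{i+1}=\delta(p_i,w_i)$, and (assuming without loss of generality that every non-$\bot$ state has nonempty residual language) this run never reaches $\bot$, yielding an accepted word of negative mean payoff after prepending a path from $\qinit$.

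The real gap is in the construction of the homomorphism $\phi$. Your preorder $q\preceq q'\Leftrightarrow L(\A_q)\subseteq L(\A_{q'})$ is in general \emph{not total}, so a \emph{minimal} element of the candidate set $\{\delta^*(\qinit,u):u\text{ labels a path }\vinit\to v'\}$ need not lie below every other element of that set. Hence from ``$\delta(\phi(v),w)$ lies in the candidate set at $v'$'' you cannot conclude $\phi(v')\preceq\delta(\phi(v),w)$; two minimal elements may simply be incomparable. Nor is there any reason for the candidate set to possess a minimum. This breaks the verification that $\phi$ is a homomorphism into your $\U$. The paper sidesteps exactly this issue: it first takes the transition graph of $\A$, then saturates it to a \emph{maximal} supergraph $\widehat{\U}$ (over $W\cup\{0\}$) that still has no negative cycle, and defines $q\le q'$ by the presence of a $0$-edge $(q,0,q')$ in $\widehat{\U}$. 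Maximality together with the absence of negative cycles forces this preorder to be \emph{total}, after which taking $\phi(v)=\max\{\delta^*(\qinit,\pi):\pi\text{ path }\vinit\to v\}$ is well defined and the homomorphism check succeeds. If you want to rescue your language-inclusion approach, you would need either to prove totality of $\preceq$ on the reachable states (which is false in general) or to replace it by a total refinement and redo the edge definition accordingly---at which point you are essentially reconstructing the saturation argument.
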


\begin{proof}
Let $\U$ be an $(n,W)$-universal graph, thanks to Corollary~\ref{cor:linearisation} we can assume that $\U$ is linear, that is, $\U \subseteq \Z$.
We construct an $(n,W)$-separating automaton $\A$. The set of states is $\U$ with a rejecting state $\bot$.
The initial state $\sinit$ is the largest vertex in $\U$, and the transition function is defined by 
\[
\delta(s,w) = \max \set{s' \in \U : s' - s \leq w},
\]
with the convention that if $\max \varnothing = \bot$.
Let $L \subseteq W^\omega$ be the language recognised by $\A$, we will show that 
$\MPn{W}{n} \subseteq L \subseteq \MP{W}$.

We start with proving $L \subseteq \MP{W}$. Let $w \in W^\omega$ accepted by $\A$, and $\rho$ the accepting run over~$w$.
Recall that $(s,w,s')$ is an edge in $\U$ if $s' - s \le w$, so in particular $(s,w,\delta(s,w))$ is an edge in $\U$.
Hence we can see $\rho$ as an infinite path in $\U$, and since $\U$ satisfies mean payoff, 
we have $w \in \MP{W}$, so $L \subseteq \MP{W}$.

We now prove $\MPn{W}{n} \subseteq L$. 
Let $w \in \MPn{W}{n}$: it is induced by a path 
$\pi = (v_0,w_0,v_1)(v_1,w_1,v_2) \cdots$
in $G$, which is an $(n,W)$-graph satisfying mean payoff.
Since $\U$ is $(n,W)$-universal, there exists a homomorphism $\phi : G \to \U$.
To show that $w$ is accepted by $\A$ we argue that the run $\rho$ over $w$ is accepting.
Let $\rho = q_0 q_1 \cdots$. We prove by induction on $i$ that $q_i \neq \bot$ and $\phi(v_i) \le q_i$.
We have $q_0 = \sinit$, with $\phi(v_0) \le \sinit$ because $\sinit$ is by definition the largest vertex in $\U$.
Assuming $q_i \neq \bot$ and $\phi(v_i) \le q_i$, we obtain that $\phi(v_{i+1}) - q_i \leq \phi(v_{i+1}) - \phi(v_i) \leq w_i$, 
so $q_{i+1} = \delta(q_i, w_i) \neq \bot$ and $\phi(v_{i+1}) \le q_{i+1}$.
Thus $w$ is accepted by $\A$ and we have $\MPn{W}{n} \subseteq L$.

\vskip1em
We turn to the converse implication.
Let $\A = (Q,\qinit,\delta)$ be an $(n,W)$-separating automaton recognising a language $L$. 
Without loss of generality, we assume that all states are reachable from $\qinit$ in $\A$. 
We construct an $(n,W)$-graph $\U$. The set of vertices is $Q \setminus \{\bot\}$,
and the set of edges is 
\[
\set{(q,w,\delta(q,w)) : q \in Q, w \in W, \delta(q,w) \neq \bot}.
\]

We start with proving that $\U$ satisfies mean payoff.
Indeed, an infinite path in $\U$ can be seen as a run in $\A$. 
Since $L \subseteq \MP{W}$, infinite runs in $\A$ satisfy mean payoff, hence all infinite paths in $\U$ satisfy mean payoff.

We do not claim that $\U$ is $(n,W)$-universal. To obtain an $(n,W)$-universal graph we proceed as follows.
We consider the set of $(n,W \cup \set{0})$-graphs over the same set of vertices as $\U$ and equip this set with the partial order given by 
inclusion of edges.
We now look at the subset of such graphs which contain $\U$ and satisfy mean payoff, and choose (arbitrarily) a maximal element $\widehat{\U}$ 
in this set.
We argue that $\widehat{\U}$ is $(n,W)$-universal.
First by definition it satisfies mean payoff, and equivalently it does not contain negative cycles.

In order to prove that it any $(n,W)$-graph satisfying mean payoff homomorphically maps into $\widehat{\U}$,
we first define a total preorder on $\widehat{\U}$.
Define $q \leq q'$ if there is an edge $(q,0,q')$ in $\widehat{\U}$, we show that $\le$ is a total preorder.
It is reflexive because adding the edge $(q,0,q)$ to $\widehat{\U}$ would not create negative cycles, 
hence it is already in $\widehat{\U}$ by maximality.
It is transitive: if $q \le q'$ and $q' \le q''$, meaning that the two edges $(q,0,q')$ and $(q',0,q'')$ are in $\widehat{\U}$,
then adding the edge $(q,0,q'')$ to $\widehat{\U}$ would not create negative cycles, hence it is already in $\widehat{\U}$ by maximality,
implying that $q \le q''$.
It is total: assume towards contradiction that neither $q \le q'$ nor $q' \le q$.
The fact that $q \le q'$ does not hold means because of the maximality of $\widehat{\U}$ 
that adding the edge $(q,0,q')$ to $\widehat{\U}$ would create a negative cycle,
implying that there exists a path $\pi$ from $q'$ to $q$ of total negative weight.
For the same reason, there exists a path $\pi'$ from $q$ to $q'$ of total negative weight.
Considering the concatenation $\pi \cdot \pi'$ of the two paths yields a negative cycle in $\widehat{\U}$, a contradiction.
Hence either $q \le q'$ or $q' \le q$, and $\le$ is total.

Let $G$ be an $(n,W)$-graph satisfying mean payoff, we show that $G$ maps homomorphically into $\widehat{\U}$.
We extend the transition function $\delta$ into $\delta^* : Q \times W^* \to Q$, $W^*$ being the set of finite words on alphabet $W$, by $\delta^*(q,\varepsilon) = q$ and 
$\delta^*(q,\pi \cdot w) = \delta(\delta^*(q,\pi),w)$. 
Define $\phi : G \to \widehat{\U}$ by 
\[
\phi(v) = \max \set{\delta^*(\qinit,\pi) : \pi \text{ path from } \vinit \text{ to } v \text{ in } G},
\]
where the maximum is with respect to the total preorder $\le$.
We first note that for any path $\pi$ from $\vinit$ to $v$ in $G$,
since $\MPn{W}{n} \subseteq L$ this path is accepted, \textit{i.e.} 
$\delta^*(\qinit,\pi) \neq \bot$.
Further, for any vertex $v$ by assumption 
there exists a path $\pi$ from $\vinit$ to $v$ in $G$, 
implying that $\phi(v) \neq \bot$.
We show that $\phi$ is a homomorphism. 
Let $(v,w,v') \in E$. There exists a path $\pi$ from $\vinit$ to $v$ in $G$ such that $\phi(v) = \delta^*(\qinit,\pi)$. 
The path $\pi \cdot (v,w,v')$ goes from $\vinit$ to $v'$ in $G$, so 
\[
\phi(v') \ge \delta^*(\qinit,\pi \cdot w) = \delta(\delta^*(\qinit,\pi),w) = \delta(\phi(v), w).
\] 
This implies that $(\delta(\phi(v), w), 0, \phi(v'))$ is an edge 
in $\widehat{\U}$.
Since $(\phi(v), w, \delta(\phi(v), w))$ is also an edge in $\widehat{\U}$,
so is the edge $(\phi(v), w, \phi(v'))$ obtained by transitivity: adding it would not create a negative cycle,
hence it is already in $\widehat{\U}$ by maximality.
Thus $\phi$ is a homomorphism, and $\widehat{\U}$ is an $(n,W)$-universal graph.
\end{proof}

Theorem~\ref{thm:equivalence} shows how the universal graph
constructed in Corollary~\ref{cor:naive} induces a separating automaton of size $O(nN)$.
Combined with Theorem~\ref{thm:reduction}, 
this simple construction already yields an algorithm matching 
the best complexity so far.

\begin{corollary}
There exists an algorithm for solving mean payoff games
of complexity $O(nmN)$.
\end{corollary}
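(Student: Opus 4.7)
The plan is simply to chain together the three results already established in the paper. First, I would invoke Corollary~\ref{cor:naive}, which exhibits an explicit $(n,(-N,N))$-universal graph, namely the linear graph on the vertex set $(-nN,nN)$. The number of vertices of this graph is $2nN-1$, so its size is $O(nN)$.

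Next, I would apply Theorem~\ref{thm:equivalence}, which converts this universal graph into an $(n,(-N,N))$-separating automaton $\A$ of the same size $O(nN)$. Concretely, one takes the states of $\A$ to be the integers in $(-nN,nN)$ together with a rejecting sink $\bot$, with initial state $nN-1$ (the largest vertex) and transition function $\delta(s,w) = \max\{s' \in (-nN,nN) : s'-s \le w\}$, returning $\bot$ if this set is empty.

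Finally, I would plug this separating automaton into Theorem~\ref{thm:reduction}, which asserts that from an $(n,W)$-separating automaton $\A$ one can build an algorithm solving $(n,W)$-mean payoff games in time $O(m \cdot |\A|)$. Substituting $|\A| = O(nN)$ yields the claimed complexity $O(nmN)$.

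There is no real obstacle here: everything reduces to composing Corollary~\ref{cor:naive}, Theorem~\ref{thm:equivalence}, and Theorem~\ref{thm:reduction}. The only minor point worth remarking is that this construction recovers, via the separating automata / universal graph framework, the complexity of the Brim--Chaloupka--Doyen--Gentilini--Raskin algorithm~\cite{BCDGR11}, confirming that the framework is at least as powerful as previous approaches and setting the stage for the improvements developed in Sections~\ref{sec:largest_weight} and~\ref{sec:number_weights}.
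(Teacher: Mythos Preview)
Your proof is correct and matches the paper's approach exactly: the paper derives the corollary by combining Corollary~\ref{cor:naive} (the universal graph $(-nN,nN)$ of size $O(nN)$), Theorem~\ref{thm:equivalence} (conversion to a separating automaton of the same size), and Theorem~\ref{thm:reduction} (the $O(m\cdot|\A|)$ reduction to safety games). Your explicit description of the automaton also agrees with the one the paper spells out immediately after the corollary.
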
 

The separating automaton constructed this way is indeed very simple: 
the set of states is $(-nN,nN)$, the initial state is $nN$, and the transition function is defined by
$\delta(q,w) = \min(nN, q + w)$.
It is an instructive exercise to directly prove that it is indeed an $(n,(-N,N))$-separating automaton.

\label{sec:universal_graphs}

\section{Constructing universal graphs}
The remainder of the paper gives upper and lower bounds on the size of universal graphs for mean payoff games.

Let us recall the results for parity games, which form the class of mean payoff games whose set of weights are of the form
$W = \set{(-n)^p : p \in [1,d]}$.

\begin{theorem}[\cite{CDFJLP18,CF18}]
For all $n,d$, let $W = \set{(-n)^p : p \in [1,d]}$.
\begin{itemize}
	\item There exists an $(n,W)$-universal graph of size $n^{O(\log(d))}$.
	\item All $(n,W)$-universal graphs have size at least $\Omega(n^{\log(d)})$.
\end{itemize}
\end{theorem}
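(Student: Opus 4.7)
The plan is to reduce the mean payoff setting to the parity setting and then invoke known bounds on universal trees. The crucial observation about the weight set $W = \set{(-n)^p : p \in [1,d]}$ is that the mean payoff condition on an $(n,W)$-graph coincides with the parity condition, where $(-n)^p$ encodes priority $p$. For any simple cycle, which has length at most $n$, let $p^*$ be the maximum priority appearing. The edges of priority $p^*$ together contribute a term of sign $(-1)^{p^*}$ and absolute value at least $n^{p^*}$, whereas the remaining edges contribute weight of absolute value at most $(n-1)\cdot n^{p^*-1} < n^{p^*}$. Hence the total weight of the simple cycle has sign $(-1)^{p^*}$, so it is non-negative iff $p^*$ is even. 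Since every closed walk decomposes into edge-disjoint simple cycles with matching total weight, Fact~\ref{fact:graph_mp} implies that an $(n,W)$-graph satisfies mean payoff iff every simple cycle has even maximum priority, which is precisely the positional parity condition.

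With this correspondence in place, both bounds follow from what is known about universal trees for parity. An $(n,d)$-universal tree is an ordered rooted tree of depth $d$ into which every ordered tree of depth $d$ with at most $n$ leaves embeds. Colcombet and Fijalkow~\cite{CF18}, in the work that also yields Theorem~\ref{thm:equivalence}, show that in the parity case the smallest $(n,W)$-universal graph and the smallest $(n,d)$-universal tree have the same size up to the $n^{\Theta(\log d)}$ scale: each level of the tree corresponds to a priority threshold, while the linear order on siblings aligns with the linearisation of Corollary~\ref{cor:linearisation}. It now suffices to cite the two known bounds on universal trees: the Jurdzi{\'n}ski--Lazi{\'c} construction~\cite{JL17} gives universal trees of size $n^{O(\log d)}$, yielding the upper bound; the matching $\Omega(n^{\log d})$ lower bound is the main technical contribution of Czerwi{\'n}ski et al.~\cite{CDFJLP18}.

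The step I expect to be most delicate is checking that the reduction from universal mean payoff graphs to universal trees is size-preserving on both sides: the upper bound direction requires turning a universal tree into a linear $W$-graph without blow-up, and the lower bound direction requires extracting a universal tree embedding from an arbitrary $(n,W)$-universal graph (after linearisation). By contrast, the weight-dominance estimate that identifies mean payoff with parity on this specific $W$ is elementary, and the heart of the lower bound — the recursive counting argument comparing the number of $(n,d)$-tree shapes to the number of leaves of any host tree — is entirely encapsulated in~\cite{CDFJLP18} and invoked as a black box.
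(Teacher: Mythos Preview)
The paper does not prove this theorem; it is stated as a summary of results from~\cite{CDFJLP18,CF18,JL17,Leh18,CJKL017} and accompanied only by the remark that each of the three quasipolynomial algorithms yields an upper-bound construction. So there is no ``paper's own proof'' to compare against: the theorem is cited, not proved, and your proposal is effectively a proof sketch of the cited results rather than of something the present paper establishes.

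That said, your sketch is faithful to what those references actually do. The weight-dominance computation showing that on an $(n,W)$-graph with $W=\{(-n)^p:p\in[1,d]\}$ the sign of every simple cycle is governed by the parity of its maximal priority is exactly the standard reduction from parity to mean payoff, and your bound $(n-1)n^{p^*-1}<n^{p^*}$ is the right one. The appeal to universal trees via~\cite{CF18} is also the intended route; note only that the correspondence there is tighter than you state: saturated $(n,W)$-universal graphs for this $W$ are \emph{exactly} universal $(n,d)$-trees (see the footnote in the paper's introduction), so the reduction is size-preserving on the nose, not merely up to the $n^{\Theta(\log d)}$ scale. With that sharpening, your identification of~\cite{JL17} for the upper bound and~\cite{CDFJLP18} for the lower bound is correct, and the step you flag as delicate is indeed the one carrying all the work in~\cite{CF18}.
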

There are three constructions for the upper bound, one for each quasipolynomial time algorithm:
the algorithm constructed by Calude, Jain, Khoussainov, Li, and Stephan~\cite{CJKL017},
the algorithm constructed by Jurdzi{\'n}ski and Lazi{\'c}~\cite{JL17}, 
and the algorithm constructed by Lehtinen~\cite{Leh18}.
We note that the complexity reported in their analysis is not made worse by rephrasing the algorithms using universal graphs, 
and even in some cases very slightly improved.

The technical core of this paper is to extend this study to arbitrary sets of weights $W$,
inducing algorithms for solving mean payoff games.
We consider two parameters on $W$: the largest weight $N$ is absolute value in Section~\ref{sec:largest_weight}, 
in other words the case where $W = (-N,N)$,
and the number of weights, \textit{i.e.} the size of $W$, in Section~\ref{sec:number_weights}.

\label{sec:constructions}

\section{Parametrised by the largest weight}
In this section we focus on the largest weight of $W$ in absolute value as parameter, so we fix $W = (-N,N)$.

We already explained how to construct an $(n,(-N,N))$-universal graph of size $O(nN)$, 
yielding an algorithm with the best known complexity.
In this section we show the following improved results.

\begin{theorem}
For all $n,N$,
\begin{itemize}
	\item There exists\footnote{For simplicity we assume that $(nN)^{1/n}$ is an integer, 
	otherwise it should be understood as $\lceil (nN)^{1/n} \rceil$, which increases the upper bound by a factor of $2$.} 
	an $(n,(-N,N))$-universal graph of size at most $2 \left( nN - ((nN)^{1/n} - 1)^n \right)$,
	which is bounded by $2n (nN)^{1 - 1/n}$.
	\item All $(n,(-N,N))$-universal graphs have size at least $N^{1 - 1/n}$.
\end{itemize}
\end{theorem}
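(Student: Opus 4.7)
The strategy is to construct an explicit linear $(n,(-N,N))$-universal graph $\U$ of the claimed size, inspired by succinct progress measures used for parity games. Set $M = (nN)^{1/n}$ so that $M^n = nN$, let $\U^+$ be the set of integers in $[0, nN)$ whose $n$-digit base-$M$ expansion contains at least one zero digit, and put $\U = \U^+ \cup (-\U^+)$. Counting the complement (integers whose $n$ digits are all nonzero, of which there are $(M-1)^n$) gives $|\U^+| \le M^n - (M-1)^n$, hence $|\U| \le 2(nN - ((nN)^{1/n}-1)^n)$.

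To prove universality, by Corollary~\ref{cor:linearisation} and Lemma~\ref{lem:naive} it suffices to show that every $B \subseteq (-nN, nN)$ with $|B| \le n$, seen as a linear graph, admits a homomorphism into $\U$. Writing $B = \set{b_1 < \cdots < b_k}$, the required $\phi$ must be strictly increasing and satisfy $\phi(b_{i+1}) - \phi(b_i) = b_{i+1} - b_i$ whenever $b_{i+1} - b_i < N$, and $\phi(b_{i+1}) - \phi(b_i) \ge N - 1$ otherwise; these are exactly the homomorphism conditions between linear graphs. I would construct $\phi$ block by block, a block being a maximal sub-sequence of $B$ whose consecutive gaps are all less than $N$. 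Each block (of size at most $n$ and span less than $n(N-1)$) is then embedded into $\U^+$ (or $-\U^+$) via a pigeonhole argument on the $n$ base-$M$ digit positions, and successive blocks are separated by at least $N-1$ in $\U$. The main obstacle is this blockwise-embedding step: one must verify the non-trivial covering property that every shape of at most $n$ integers of span less than $n(N-1)$ admits a translate lying inside $\U^+$.

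\textbf{Lower bound.} The strategy is a counting argument based on a distance-preservation lemma. First, I would prove that if $\phi : B \to A$ is a homomorphism between linear $(-N,N)$-graphs and $b, b' \in B$ satisfy $|b - b'| < N$, then $\phi(b') - \phi(b) = b' - b$. Indeed, since $|b-b'| < N$, both $b'-b$ and $b-b'$ lie in $W$, so both the forward edge $(b, b'-b, b')$ and the backward edge $(b', b-b', b)$ are present in $B$; applying $\phi$ to each yields the matching inequalities $\phi(b') - \phi(b) \le b' - b$ and $\phi(b') - \phi(b) \ge b' - b$, hence equality.

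Second, for every composition $(g_1, \ldots, g_{n-1})$ of positive integers with $\sum_{i=1}^{n-1} g_i \le N - 1$, the set $B_g = \set{0, g_1, g_1 + g_2, \ldots, g_1 + \cdots + g_{n-1}}$ is, viewed as a linear graph, a valid $(n, (-N,N))$-graph satisfying mean payoff (its span is at most $N - 1$, so every pair of vertices is within $W$). By Corollary~\ref{cor:linearisation} we may assume the given universal graph $A$ is linear; the distance-preservation lemma then forces the image of $B_g$ in $A$ to be an $n$-element subset whose consecutive differences are exactly $(g_1, \ldots, g_{n-1})$. Since distinct compositions yield distinct $n$-subsets of $A$, we obtain $\binom{|A|}{n} \ge \binom{N-1}{n-1}$ (the latter equalling the total number of such compositions by the hockey stick identity). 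A routine estimate using $\binom{|A|}{n} \le |A|^n/n!$ and $\binom{N-1}{n-1} \ge (N-1)^{n-1}/(n-1)!$ then yields $|A| \ge N^{1 - 1/n}$.
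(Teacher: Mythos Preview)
Your construction is essentially the same object as the paper's --- integers with a zero digit in base $(nN)^{1/n}$ --- and the size count is correct. However, you explicitly leave open ``the main obstacle'', namely the covering property that any admissible $n$-point shape has a translate inside $\U^+$. That is precisely the heart of the proof, and without it you have only defined a candidate, not shown it is universal. The paper fills this gap with a short inductive argument: writing $\phi(v_0)=\sum_{i=0}^{n-1} a_i b^i$, one chooses the digits $a_0,a_1,\ldots$ one at a time so that the $k$-th digit of $\phi(v_k)=\phi(v_0)+\sum_{j\le k} w_j$ is zero; since $a_k$ affects only digit $k$ and higher, this is always possible. Two further simplifications you are missing: (i) the paper observes (using shortest-path structure) that $\Lin(G)$ already has all consecutive gaps in $[1,N)$, so there is only one ``block'' and your block decomposition is unnecessary; (ii) because the digit-choice for $\phi(v_0)$ ranges over all of $[0,nN)$, the images $\phi(v_i)$ may reach $2nN$, which is why the paper takes the universal graph inside $[0,2nN)$ rather than symmetrising as $\U^+\cup(-\U^+)$.

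\textbf{Lower bound.} Your distance-preservation lemma is exactly Fact~\ref{fact:equality} specialised to length-two cycles, and the idea is right. But the final estimate is wrong: $\binom{N-1}{n-1}=\frac{(N-1)(N-2)\cdots(N-n+1)}{(n-1)!}\le \frac{(N-1)^{n-1}}{(n-1)!}$, so your claimed inequality points the wrong way, and with correct bounds (e.g.\ $(N-n+1)^{n-1}$ for the numerator) you no longer get exactly $N^{1-1/n}$ for all $n,N$. The paper avoids this by a cleaner counting: instead of restricting to compositions with bounded sum and mapping to $n$-subsets, it takes \emph{all} tuples $(w_1,\ldots,w_{n-1})\in[0,N)^{n-1}$ and maps each to the ordered tuple $\bigl(\phi(0),\phi(w_1),\ldots,\phi(\sum w_i)\bigr)\in\U^n$. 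Distance preservation makes this injective, yielding $N^{n-1}\le|\U|^n$ directly.
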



\begin{corollary}
There exists an algorithm for solving mean payoff games
of complexity 
\[
O(mn (nN)^{1 - 1/n}).
\]
\end{corollary}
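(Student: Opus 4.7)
The plan is to obtain this corollary as a direct composition of three results already established in the paper, so there is very little work to do beyond bookkeeping. The starting point is the upper bound just stated in the theorem: for every $n$ and $N$ there exists an $(n,(-N,N))$-universal graph $\U$ of size at most $2n(nN)^{1-1/n}$.

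First I would feed $\U$ into Theorem~\ref{thm:equivalence}. Its first bullet turns $\U$ into an $(n,(-N,N))$-separating automaton $\A$ whose number of states equals $|\U|$, hence $|\A| \le 2n(nN)^{1-1/n}$. Next I would apply Theorem~\ref{thm:reduction} to $\A$: given any $(n,(-N,N))$-mean payoff game with $m$ edges, this yields an algorithm deciding whether Eve has a winning strategy in time $O(m \cdot |\A|)$. Substituting the bound on $|\A|$ gives overall complexity $O(m \cdot n (nN)^{1-1/n})$, which is exactly the claimed bound.

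Since both Theorem~\ref{thm:equivalence} and Theorem~\ref{thm:reduction} are already proved in the preceding sections, there is essentially no obstacle; the only thing to watch is that the constant factor $2$ from the upper bound (and the implicit $\lceil \cdot \rceil$ in the footnote) is absorbed into the $O(\cdot)$ notation. For clarity I would also mention that the algorithm produced this way is concretely the one that builds the safety game $\G \times \A$ and solves it in linear time via Lemma~\ref{lem:safety_games}, so no further construction is needed.
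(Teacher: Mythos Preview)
Your proposal is correct and follows exactly the route the paper intends: combine the universal-graph upper bound with Theorem~\ref{thm:equivalence} and then Theorem~\ref{thm:reduction}, absorbing the constant $2$ into the $O(\cdot)$. The paper does not spell out a separate proof for this corollary, so your chain of implications is precisely what is meant.
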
 

\subsection*{Upper bound}

\begin{proposition}
There exists an $(n,(-N,N))$-universal graph of size $2 \left( nN - ((nN)^{1/n} - 1)^n \right)$.
\end{proposition}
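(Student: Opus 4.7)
The plan is to construct an explicit linear $(n,(-N,N))$-universal graph $A\subseteq\Z$ of the stated size. By Corollary~\ref{cor:linearisation} one can restrict to linear universal graphs; by Lemma~\ref{lem:naive} every mean-payoff $G$ factors through its linearisation $\Lin(G)\subseteq(-nN,nN)$. So it suffices to check that for every $B\subseteq(-nN,nN)$ with $|B|\le n$, the linear graph on $B$ admits a homomorphism $\phi$ into the linear graph on $A$. Unfolding the forward and backward edges of linear graphs for each pair $b<b'$ in $B$ forces $\phi(b')-\phi(b)=b'-b$ whenever $b'-b\le N-1$, and $\phi(b')-\phi(b)\ge N-1$ whenever $b'-b\ge N$. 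Hence $B$ partitions into \emph{clusters} of consecutive elements whose internal gaps are at most $N-1$; within each cluster $\phi$ acts as a translation, and consecutive cluster images must be separated by at least $N-1$.

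Next, I would set $k=\lceil(nN)^{1/n}\rceil$ so that $k^n\ge nN$, and identify $[0,k^n)$ with $[0,k)^n$ via base-$k$ expansion. Define $A^+\subseteq[0,k^n)$ to consist of those integers whose digit tuple has at least one coordinate equal to $k-1$. Its complement is the sub-hypercube $[0,k-1)^n$ of size $(k-1)^n$, so $|A^+|=k^n-(k-1)^n$; the symmetrised set $A=A^+\cup(-A^+)$ then has size at most $2(nN-((nN)^{1/n}-1)^n)$. The weaker bound $2n(nN)^{1-1/n}$ follows from the convexity estimate $k^n-(k-1)^n\le nk^{n-1}$, obtained by applying the mean value theorem to $x\mapsto x^n$ on $[k-1,k]$.

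For the universality claim I would process the cluster decomposition $B=C_1\sqcup\cdots\sqcup C_s$ greedily from left to right, choosing for each $C_j$ a translation $\tau_j$ with $\tau_j+C_j\subseteq A^+$ while maintaining the $\ge N-1$ gap from the image of $C_{j-1}$. The structural ingredient is a layered description of $A^+$: for each level $i\in[0,n)$ the slab $\{a:a_i=k-1\}$ is a union of $k^{n-1-i}$ contiguous blocks of length $k^i$, each entirely contained in $A^+$. A cluster of width less than $k^i$ embeds as a translate inside any level-$i$ block; wider clusters are accommodated by stitching together fragments from several slabs. The greedy allocation is verified by induction on $n$, conditioning on the leading base-$k$ digit so that on each sub-hypercube the inductive hypothesis for $n-1$ supplies the placement of the remaining clusters.

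The main obstacle is this last inductive argument. The translations of different clusters interact through the $\ge N-1$ separation requirement, and the slab picture of $A^+$ involves overlaps that must be tracked delicately. The induction must maintain simultaneously a shrinking hypercube as one descends a digit level, a decreasing roster of unplaced clusters, and a running cursor in $[0,k^n)$ marking the progress of the allocation; the recursive product structure of $[0,k)^n$ is what ultimately provides enough slack for this bookkeeping to close.
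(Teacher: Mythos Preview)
Your construction of the target set $A$ is essentially the same as the paper's (digits with a fixed value in base $(nN)^{1/n}$), but you miss the observation that collapses the whole cluster machinery: in $\Lin(G)$ the \emph{consecutive} gaps are always strictly less than $N$. Indeed, writing $\Lin(G)=\{\dist(\vinit,v):v\in V\}$, every nonzero value $d$ arises as $d'+w$ with $w\in(-N,N)$ and $d'$ the distance of the shortest-path predecessor; following predecessors gives a tree rooted at $0$ in which every step changes the value by less than $N$. If there were a gap $[a,b]$ of width at least $N$ in the sorted list of values, walking from a node of value $b$ (or $a$, according to the sign) back to the root would have to land strictly inside $(a,b)$ at some step, a contradiction. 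Hence your ``cluster decomposition'' is always trivial: there is a single cluster, and any homomorphism into a linear graph is a single translation $v\mapsto v+\tau$.

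Once this is seen, the remaining task is elementary: choose $\tau$ so that each of the $n$ translated values $\phi(v_i)=\tau+(v_i-v_0)$ has a prescribed digit equal to the distinguished value. The paper does this by fixing the base-$b$ digits $a_0,a_1,\dots,a_{n-1}$ of $\tau$ one at a time, using the fact that the $i$th digit of $\phi(v_i)$ depends only on $a_0,\dots,a_i$; at step $i$ one sets $a_i$ to make $\phi(v_i)[i]=0$, which leaves all previously arranged digits untouched. This replaces your unfinished greedy/inductive allocation over clusters entirely. So the gap in your proposal is not the choice of $A$, but the attempt to embed \emph{arbitrary} $n$-element subsets of $(-nN,nN)$: that stronger statement is neither needed nor (with your $A$) easy to carry through, whereas restricting to the sets that actually occur as $\Lin(G)$ makes the problem a one-line digit computation.
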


Before giving the formal construction, we give some intuitions.
The simple construction in Lemma~\ref{lem:naive} shows that the linear graph $(-nN,nN)$ is $(n,(-N,N))$-universal.
In this construction the initial vertex $\vinit$ is always mapped to $0$ by the homomorphism.
By allowing ourselves to map $\vinit$ anywhere in the linear graph, we get some slack which enables us to remove some values from $(-nN,nN)$ 
while staying universal.

\begin{proof}
Let $b = (nN)^{1/n}$.
We write integers $a \in [0,2nN)$ in basis $b$, hence using $n+1$ digits written $a[i] \in [0,b)$, that is,
\[
a = \sum_{i = 0}^n a[i] (nN)^{i/n}.
\]
Note that since $a \in [0,2nN)$ the $(n+1)$\textsuperscript{th} digit is either $0$ or $1$.
We let $B$ be the set of integers in $[0,2nN)$ which have at least one zero digit among the first $n$ digits in this decomposition.
We see $B$ as a $(-N,N)$-linear graph and argue that it is an $(n,(-N,N))$-universal graph.

Let $G$ be an $(n,(-N,N))$-graph satisfying mean payoff.
Thanks to Lemma~\ref{lem:naive} 
the graph $G$ homomorphically maps into the $(-N,N)$-linear graph $\Lin(G)$.
We set $\Lin(G) = \set{v_0 < \dots < v_{n-1}}$.
Let $w_i = v_i - v_{i-1}$ for $i \in [1,n-1]$. 
We argue that $w_i \in [1,N)$. This follows from the observation that for any $v \in \Lin(G)$,
either $v = \vinit$ or there exists $v' \in \Lin(G)$ such that $v \neq v'$ and $|v' - v| < N$.
Indeed, recall that $\Lin(G) = \set{ \dist(\vinit,v) : v \in V }$.
We proceed by induction on the number of edges of a shortest path witnessing $\dist(\vinit,v)$.
For $v \in V$, either $v = \vinit$ or $\dist(\vinit,v) = \dist(\vinit,v') + w$ with $(v',w,v) \in~E$,
so $w \in (-N,N)$. In case $w = 0$ we rely on the inductive hypothesis to conclude.

We construct a homomorphism $\phi$ from $\Lin(G)$ to $B$, which composed with the homomorphism from $G$ to $\Lin(G)$
yields a homomorphism from $G$ to $B$.
We note that $\phi$ is actually fully determined by $\phi(v_0)$: 
indeed, let $i \in [1,n-1]$, we have $\phi(v_i) = \phi(v_{i-1}) + w_i$ thanks to Fact~\ref{fact:equality}
applied to the cycle $(v_{i-1},w_i,v_i)(v_i,-w_i,v_{i-1})$.
It follows that 
$\phi(v_i) = \phi(v_0) + \sum_{j = 1}^i w_j$.

We choose $\phi(v_0)$ of the form $\sum_{i = 0}^{n-1} a_i b^i$ for $a_i \in [0,b)$,
\textit{i.e.} the $(n+1)$\textsuperscript{th} digit is $0$, or equivalently $\phi(v_0) < nN$.
Since each $w_j$ is in $[1,N)$, for $i \in [1,n-1]$ we have $\sum_{j = 1}^i w_j \in [1,nN)$,
so $\phi(v_i) \in [0,2nN)$. 
Hence to ensure $\phi(v_0),\phi(v_1),\dots, \phi(v_{n-1}) \in B$ it is enough that each of them
has a zero digit among the first $n$ digits in the decomposition in base~$b$.

We show how to choose $a_0,a_1, \dots, a_{n-1}$ in order to ensure $\phi(v_0),\phi(v_1),\dots, \phi(v_{n-1}) \in B$. 
More precisely, we show by induction on $k \in [0,n)$ that there exist $a_0,\dots, a_k \in [0,b)$ such that 
for any $a_{k+1}, \dots, a_{n-1} \in [0,b)$, for all $i \in [0,k]$, 
the $i$-th digit $\phi(v_i)[i]$ of $\phi(v_i)$ is $0$. 
For $k = 0$, we let $a_0 = \phi(v_0)[0] = 0$, independent of the values of $a_1,\dots, a_{n-1}$.

Let $a_0,\dots, a_{k-1}$ be such that for any choice of $a_k, \dots, a_{n-1}$, 
for any $i \in [0,k)$, we have $\phi(v_i)[i] = 0$. 
Let $a_k \in [0,b)$ be the unique value such that 
$\left(\sum_{i = 0}^k a_i b^i + \sum_{i = 1}^k w_i  \right)[k] = 0$. 
Let $a_{k+1}, \dots, a_{n-1} \in [0,b)$. 
By induction hypothesis, for any $i \in [0,k)$, we have $\phi(v_i)[i] = 0$. 
Now
\[
\phi(v_k)[k] = \left(\phi(v_0) + \sum_{i = 1}^k w_i \right)[k]
 = \left(\sum_{i = 0}^{n-1} a_i b^i + \sum_{i = 1}^k w_i \right)[k]
 = \left(\sum_{i = 0}^k a_i b^i + \sum_{i = 1}^k w_i \right)[k] = 0
\]
which concludes the induction, and the construction of the homomorphism $\phi$.
The size of $B$ is $2 \left( nN - ((nN)^{1/n} - 1)^n \right)$.
\end{proof}

\subsection*{Lower bound}

\begin{proposition}
Any $(n,(-N,N))$-universal graph has size at least $N^{1 - 1/n}$.
\end{proposition}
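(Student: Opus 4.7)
The plan is to exploit rigidity of homomorphisms into linear graphs, reducing the problem to counting $n$-subsets of~$\U$.

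First, by Corollary~\ref{cor:linearisation}, it suffices to prove the lower bound for \emph{linear} $(n,(-N,N))$-universal graphs, so we may assume $\U \subseteq \Z$. Next, for every gap sequence $\mathbf{w} = (w_1, \ldots, w_{n-1}) \in [1,N)^{n-1}$, I would introduce the $(-N,N)$-linear graph $G_{\mathbf{w}}$ whose vertex set consists of the $n$ integers
\[
0,\ w_1,\ w_1 + w_2,\ \ldots,\ w_1 + \cdots + w_{n-1}.
\]
This graph has exactly $n$ distinct vertices because $w_i \geq 1$, all edge weights lie in $(-N,N)$ because $w_i < N$, and it satisfies mean payoff by construction (linear graphs have no negative cycles). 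By universality of~$\U$, there is a homomorphism $\phi_{\mathbf{w}} : G_{\mathbf{w}} \to \U$.

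The key observation is that each such homomorphism is forced to be a translation. For every consecutive pair of vertices at positions $p_i = \sum_{j \leq i} w_j$ and $p_{i+1} = p_i + w_i$, the graph $G_{\mathbf{w}}$ contains a $2$-cycle $(p_i, w_i, p_{i+1})(p_{i+1}, -w_i, p_i)$ of total weight $0$. Applying Fact~\ref{fact:equality} to this cycle forces $\phi_{\mathbf{w}}(p_{i+1}) - \phi_{\mathbf{w}}(p_i) = w_i$, and iterating gives $\phi_{\mathbf{w}}(p_i) = c_{\mathbf{w}} + \sum_{j \leq i} w_j$ where $c_{\mathbf{w}} = \phi_{\mathbf{w}}(0)$. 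In particular, $\U$ contains a translate of the $n$-element set $\{0, w_1, w_1+w_2, \ldots, w_1+\cdots+w_{n-1}\}$.

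Finally, a pigeonhole-style count closes the argument. Any $n$-subset $S \subseteq \U$ determines, when sorted, a unique gap sequence in $\Z^{n-1}$; so two patterns $\mathbf{w} \neq \mathbf{w}'$ yield images $\phi_{\mathbf{w}}(G_{\mathbf{w}}) \neq \phi_{\mathbf{w}'}(G_{\mathbf{w}'})$ as $n$-subsets of $\U$. Consequently the number of $n$-subsets of $\U$ is at least the number of gap sequences, i.e.
\[
\binom{|\U|}{n} \;\geq\; (N-1)^{n-1},
\]
and combining this with $\binom{|\U|}{n} \leq |\U|^n$ gives the desired bound $|\U| \geq (N-1)^{1 - 1/n}$, matching $N^{1 - 1/n}$ up to negligible additive terms. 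The main obstacle is recognising that the $2$-cycles of weight~$0$ make the embedding rigid via Fact~\ref{fact:equality}; once that is in hand the counting step is purely mechanical.
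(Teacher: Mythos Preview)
Your argument is correct and follows the same route as the paper: reduce to linear $\U$, build the path graphs $\{0,w_1,\ldots,\sum w_i\}$, use Fact~\ref{fact:equality} on the weight-zero $2$-cycles to force the homomorphism to be a translation, and count. The only difference is that the paper maps directly into the tuple space $\U^n$ and allows $w_i\in[0,N)$, which avoids the detour through $n$-subsets and recovers the exact constant $N^{1-1/n}$ rather than $(N-1)^{1-1/n}$.
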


\begin{proof}
Let $\U$ be an $(n,(-N,N))$-universal graph. Thanks to Corollary~\ref{cor:linearisation} we can assume that $\U$ is linear.
We construct an injective function
$f : [0,N)^{n-1} \to \U^n$.
For $(w_1,\dots,w_{n-1}) \in [0,N)^{n-1}$, we consider the linear graph
$A = \set{0,w_1,w_1 + w_2,\dots,\sum_{i = 1}^{n-1} w_i}$,
which has size $n$, hence homomorphically maps into~$\U$. 
Let $\phi : A \to \U$.
Define $f(w_1,\dots,w_{n-1})$ to be
\[
\left( \phi(0),\phi(w_1),\phi(w_1 + w_2),\dots,\phi \left( \sum_{i = 1}^{n-1} w_i \right) \right).
\]

To see that $f$ is injective, we apply Fact~\ref{fact:equality} to each cycle
\[
\left( \sum_{i = 1}^{j-1} w_i,w_j,\sum_{i = 1}^j w_i \right) \left( \sum_{i = 1}^j w_i,-w_j,\sum_{i = 1}^{j-1} w_i \right)
\]
for $j \in [1,n-1)$. We obtain
\[
\begin{array}{lll}
\phi(w_1) - \phi(0) & = & w_1 \\
\phi(w_1 + w_2) - \phi(w_1) & = & w_2 \\
& \ \vdots & \\
\phi \left( \sum_{i = 1}^{n-1} w_i \right) - \phi \left(\sum_{i = 1}^{n-2} w_i \right) & = & w_{n-1}.
\end{array}
\]
Since $f$ is injective this implies that 
$N^{n-1} \le |\U|^n$, so $|\U| \ge N^{1 - 1/n}$.
\end{proof}

\label{sec:largest_weight}

\section{Parametrised by the number of weights}
In this section we focus on the size of $W$ as a parameter.

\begin{theorem}
For all $k$,
\begin{itemize}
	\item For all $n$, for all $W \subseteq \Z$ of size $k$, there exists an $(n,W)$-universal graph of size $n^k$.
	\item For $n$ large enough, there exists $W \subseteq \Z$ of size $k$ such that 
	all $(n,W)$-universal graphs have size at least $\Omega(n^{k-2})$.
\end{itemize}
\end{theorem}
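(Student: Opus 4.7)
The plan is to prove the two parts separately, first the upper bound and then the lower bound.

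For the upper bound, I would set
\[
\U = \left\{ \sum_{i=1}^k c_i w_i : c_i \in \{0, 1, \ldots, n-1\} \right\} \subseteq \Z
\]
and view it as a $W$-linear graph, where $W = \{w_1, \ldots, w_k\}$. Immediately $|\U| \leq n^k$. To see $\U$ is $(n, W)$-universal, I take an arbitrary $(n,W)$-graph $G$ satisfying mean payoff and apply Lemma~\ref{lem:naive} to obtain a homomorphism $G \to \Lin(G)$; it then suffices to show $\Lin(G) \subseteq \U$ as subsets of $\Z$, since set inclusion is automatically a homomorphism between linear graphs. This containment holds because every distance $\dist(\vinit,v)$ in a graph without negative cycles is attained by a simple path with at most $n-1$ edges, which decomposes as $\sum c_i w_i$ with each $c_i \in [0,n)$ by grouping edges of the same weight.

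For the lower bound, by Corollary~\ref{cor:linearisation} I may restrict attention to linear $\U$. The plan is to pick $W$ of size $k$ of the form $\{-1, 1, w_1, \ldots, w_{k-2}\}$, where the positive weights $w_1, \ldots, w_{k-2}$ are selected (depending on $k$ and $n$) so that linear combinations $\sum a_j w_j$ with $a_j$ ranging over an appropriate box of size $\Omega(n^{k-2})$ are pairwise distinct, and so that the test graphs described below fit within $n$ vertices. For each tuple $a = (a_1, \ldots, a_{k-2})$ in this box I would build an $(n, W)$-graph $G_a$ whose underlying structure is a zero-weight cycle: a forward path traversing $a_j$ edges of weight $w_j$ for each $j$ (in some fixed order), closed off by return edges of weight $-1$ of matching total length $\sum a_j w_j$. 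Mean payoff is automatic since the cycle has total weight zero, and Fact~\ref{fact:equality} forces every homomorphism $\phi_a : G_a \to \U$ to act as an exact translation along the cycle, recording the entire partial-sum set of the forward path as a shifted subset of $\U$.

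The main obstacle, which I expect to be the hard part, is converting these $\Omega(n^{k-2})$ distinct test patterns into $\Omega(n^{k-2})$ distinct elements of $\U$ despite the \emph{a priori} freedom in each translation $\phi_a(\vinit)$. My plan is to exploit the rich structure of the forward-path partial-sum sets: distinct tuples $a$ produce genuinely different ``staircase'' shapes inside $\U$, and a careful injectivity argument (using the independence of the $w_j$'s to reconstruct $a$ from a bounded number of entries of the pattern) should yield an injection from the parameter space of size $\Omega(n^{k-2})$ into $\U$, or into a Cartesian power of $\U$ of constant dimension. The delicate point is to choose the magnitudes of the $w_j$'s so that the zero-weight cycle of $G_a$ stays within the $n$-vertex budget across the whole box, while still keeping the embedded patterns distinguishable; securing the exponent $k-2$ rather than a weaker one is the crux.
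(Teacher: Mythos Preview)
Your upper bound is correct and essentially identical to the paper's: both observe that $\Lin(G)$ consists of sums of at most $n-1$ elements of $W$, hence lies in a set of size at most $n^{k}$.

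Your lower bound, however, has a genuine gap, and the construction as described cannot reach $\Omega(n^{k-2})$.

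The first problem is the choice of $-1$ for the return path. Closing the forward path of total weight $P=\sum_j a_j w_j$ with $-1$-edges produces a cycle on $\sum_j a_j + P$ vertices. Fitting this inside $n$ forces $P\le n$, and then, because the return path sweeps through every integer in $[0,P]$, Fact~\ref{fact:equality} makes the image $\phi_a(G_a)$ equal to a translate of the full interval $[0,P]$. All information about the individual $a_j$'s is lost: only $P$ survives, and $P$ takes at most $n$ values. So there is no injection of an $\Omega(n^{k-2})$-sized family into $\U$.

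The second problem is the fallback of injecting into $\U^{c}$ for some constant $c$. Recording the $k-1$ ``corner'' images gives an injection into $\U^{k-1}$, but this only yields $|\U|\ge n^{(k-2)/(k-1)}<n$. To land at $|\U|\ge n^{k-2}$ via $\U^{c}$ you would need a parameter family of size $n^{c(k-2)}$, which your cycles cannot support under the $n$-vertex budget.

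The paper resolves both issues simultaneously. It takes $W=\{1,n,\dots,n^{k-2},-\tfrac{n-1}{k-1}T\}$ with $T=1+n+\dots+n^{k-2}$, so a \emph{single} large negative edge closes the cycle and the forward path always has exactly $n-1$ edges. The forward path is then partitioned into $k$ blocks, and the parameter is the $(k-1)\times k$ matrix recording how many copies of each weight $n^{j}$ go into each block, subject to each row summing to $\tfrac{n-1}{k-1}$. This gives a family of size roughly $n^{(k-1)^{2}}$. By Fact~\ref{fact:equality} the $k$ block-boundary images in $\U$ determine the block sums, and writing those in base $n$ recovers the full matrix. The resulting injection into $\U^{k}$ gives $|\U|\ge n^{(k-1)^{2}/k}\ge n^{k-2}$. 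Both the single closing edge and the block subdivision are essential; neither appears in your plan.
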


\begin{corollary}
There exists an algorithm for solving mean payoff games with $k$ weights
of complexity $O(m \cdot n^k)$.
\end{corollary}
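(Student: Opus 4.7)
The plan is to derive this corollary directly by chaining three results already established in the paper, using only the upper bound half of the preceding theorem (the size-$n^k$ universal graph construction). The lower bound plays no role here.

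First, I would invoke the upper bound in the theorem above: for the given set $W$ of $k$ weights and parameter $n$, there exists an $(n,W)$-universal graph $\U$ with $|\U| \le n^k$. The proof of this fact is part of the theorem and may be assumed. Next, I would apply Theorem~\ref{thm:equivalence}, which turns this universal graph into an $(n,W)$-separating automaton $\A$ of the same size, so $|\A| \le n^k$. Finally, I would plug this automaton into Theorem~\ref{thm:reduction}, which gives an algorithm solving arbitrary $(n,W)$-mean payoff games in time $O(m \cdot |\A|) = O(m \cdot n^k)$.

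The chain of implications is
\[
\text{universal graph of size } n^k \ \Longrightarrow\ \text{separating automaton of size } n^k \ \Longrightarrow\ \text{algorithm of complexity } O(m \cdot n^k).
\]
Since $k$ is the size of the weight set of the input game and $n$, $m$ are its number of vertices and edges, this is exactly the claimed bound. There is no real obstacle: every ingredient is already stated as a black-box theorem in the paper, so the proof of the corollary is a one-line composition. The only minor point worth mentioning explicitly is that the universal graph construction works for every finite $W \subseteq \Z$ of size $k$, so the resulting algorithm is uniform in the actual weights appearing in the input and depends on them only through their number $k$.
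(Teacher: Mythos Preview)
Your proposal is correct and follows exactly the same route as the paper: the corollary is obtained by combining the $n^k$ upper bound on the size of an $(n,W)$-universal graph with Theorem~\ref{thm:equivalence} and Theorem~\ref{thm:reduction}. The paper leaves this chain implicit (simply writing ``it follows that there exists an algorithm \dots''), so your spelled-out version is, if anything, more explicit than the original.
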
 

\subsection*{Upper bound}

Define $||W||_n$ to be the number of different sums of $n$ terms of $W$.

\begin{proposition}
There exists an $(n,W)$-universal graph of size $||W||_n$.
\end{proposition}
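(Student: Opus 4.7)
The plan is to take $U$ to be the $W$-linear graph on vertex set $S_n(W) = \set{ w_1 + \dots + w_n : w_i \in W }$; this set has $\|W\|_n$ elements by definition, so $|U| = \|W\|_n$.

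First I would check that $U$ satisfies mean payoff. This holds for every linear graph: along any cycle $a_0 \to a_1 \to \dots \to a_\ell = a_0$ with labels $w_0, \dots, w_{\ell-1}$, summing the defining inequalities $a_{i+1} - a_i \leq w_i$ yields $0 \leq \sum_i w_i$, so by Fact~\ref{fact:graph_mp} $U$ satisfies mean payoff.

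For universality, I would fix an arbitrary $(n,W)$-graph $G$ satisfying mean payoff. By Lemma~\ref{lem:naive} $G$ maps homomorphically to the linear graph $\Lin(G) = \set{ \dist(\vinit, v) : v \in V }$, so it suffices to produce a $W$-linear graph homomorphism $\phi \colon \Lin(G) \to U$. Since $G$ has no negative cycles and at most $n$ vertices, every $d \in \Lin(G)$ is the total weight of some simple path of length $\ell_d \leq n-1$ from $\vinit$, which exhibits $d$ as a sum of $\ell_d$ weights of $W$. The natural candidate for $\phi$ is obtained by padding: fix $w^\star \in W$ and set $\phi(d) := d + (n - \ell_d) w^\star$, which by construction lies in $S_n(W)$.

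The main obstacle will be verifying that this $\phi$ is a homomorphism, i.e.\ that for every edge $(d, w, d')$ of $\Lin(G)$ one has $\phi(d') - \phi(d) = (d' - d) + (\ell_d - \ell_{d'}) w^\star \leq w$. The only nontrivial term is the correction $(\ell_d - \ell_{d'}) w^\star$, and I would control it by making the path choices coherent: pick the witnessing paths from a single shortest-path DAG rooted at $\vinit$, so that $\ell_{d'} \leq \ell_d + 1$ along every edge of $G$, and then choose the sign of $w^\star$ in $W$ so that the correction has the right sign. On any zero-weight 2-cycle in $\Lin(G)$, Fact~\ref{fact:equality} forces $\phi$ to act as an exact shift, and I expect to combine this local rigidity with the coherence of path lengths to obtain the homomorphism property in all cases.
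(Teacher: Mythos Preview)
Your proposal has a genuine gap, and the paper's argument is much simpler than what you attempt.

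The paper's one-line proof just observes that $\Lin(G)$ is already a \emph{subset} of the set $S$ of sums of (at most) $n$ weights from $W$: every distance $\dist(\vinit,v)$ is realised by a simple path of length $\le n-1$, hence is such a sum. The crucial point you are missing is that for $W$-linear graphs, set inclusion is automatically a homomorphism: if $A\subseteq B\subseteq\Z$ and $(a,w,a')$ is an edge of the linear graph on $A$ (i.e.\ $a'-a\le w$), then it is also an edge of the linear graph on $B$. So the composite $G\to\Lin(G)\hookrightarrow S$ is a homomorphism with no further work. No padding is needed.

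Your padding map $\phi(d)=d+(n-\ell_d)w^\star$ runs into real trouble. First, $\ell_d$ is not well defined: several vertices may share the same distance $d$ but sit at different depths in any shortest-path tree, so ``$\ell_d$'' is a property of a vertex, not of a distance, and your map $\phi$ is supposed to be defined on $\Lin(G)$. Second, even if you define the map vertex-wise and only check edges of $G$, the correction term $(\ell_v-\ell_{v'})w^\star$ has no controlled sign: along a tight edge $(v,w,v')$ you need $(\ell_v-\ell_{v'})w^\star\le 0$, but $\ell_v-\ell_{v'}$ can be positive on some edges and negative on others, so no single choice of $w^\star$ works. For a concrete obstruction take $W=\{1,3\}$, $n=4$, and the graph $\vinit\xrightarrow{1}a\xrightarrow{1}b\xrightarrow{1}c$ together with $\vinit\xrightarrow{3}c$: both paths to $c$ are shortest, and whichever one your ``coherent'' tree selects, the \emph{other} edge into $c$ forces $w^\star\le 0$, which is impossible in $W$. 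Your appeal to Fact~\ref{fact:equality} does not help here, since that fact concerns zero-weight cycles and says nothing about generic edges of $G$.

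In short: drop the padding and use the inclusion $\Lin(G)\hookrightarrow S$ directly; that is the whole proof.
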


\begin{proof}
We simply observe that the linear graph constructed in Lemma~\ref{lem:naive}
is included in the set of different sums of $n$ terms of $W$.
\end{proof}

It follows that there exists an algorithm for solving mean payoff games of complexity $O(m \cdot ||W||_n)$.
In particular for $|W| = k$ this yields an algorithm in $O(m \cdot n^k)$,
which is polynomial for a constant $k$.

\subsection*{Lower bound}
We let 
$T = 1 + n + n^2 + \dots + n^{k-2}$
and 
$W = \set{1,n,n^2,\dots,n^{k-2}, - \frac{n-1}{k-1} T}$.
Note that $W$ has indeed size $k$.

\begin{proposition}
Let $\U$ be an $(n,W)$-universal graph. 
Then 
$|\U| \geq \left( \frac{n-1}{(k-1)^2} \right)^{\frac{(k-1)^2}{k}}$.
\end{proposition}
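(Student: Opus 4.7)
The plan is to mirror the lower bound strategy of the preceding section (the case $W=(-N,N)$): by Corollary~\ref{cor:linearisation} assume $\U \subseteq \Z$ is linear, and then exhibit an injection of a parameter set of size roughly $\lfloor (n-1)/(k-1)^2\rfloor^{(k-1)^2}$ into $\U^k$, so that
\[
|\U|^k \geq \left(\frac{n-1}{(k-1)^2}\right)^{(k-1)^2}.
\]
Raising to the power $1/k$ then yields the claimed bound. The form of the target exponent $(k-1)^2/k$ strongly suggests parameterizing by $(k-1)^2$-tuples $(a_{i,j})_{i,j\in[1,k-1]}$ of non-negative integers with each coordinate in $[0,\lfloor (n-1)/(k-1)^2\rfloor]$, and extracting a $k$-tuple of ``anchor'' vertex images from each corresponding graph.

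To each admissible tuple I would associate a cycle graph $G_{(a)}$ with $k$ designated anchors $v_0,\dots,v_{k-1}$: from $v_{j-1}$ to $v_j$ we run a path carrying, in some fixed canonical order, exactly $a_{i,j}$ edges of weight $n^{i-1}$ for each $i\in[1,k-1]$, and we close the cycle by the single edge $v_{k-1}\to v_0$ of weight $-(n^{k-1}-1)/(k-1)$. Restricting to tuples that satisfy $\sum_{i,j}a_{i,j}\leq n-1$ ensures the cycle has at most $n$ vertices, and imposing $\sum_{i,j}a_{i,j}n^{i-1}=(n^{k-1}-1)/(k-1)$ makes the cycle zero-sum. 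Fact~\ref{fact:equality} then forces $\phi(v_j)-\phi(v_{j-1})=W_j := \sum_i a_{i,j}n^{i-1}$ for any homomorphism $\phi$ into $\U$. Since $a_{i,j}<n$, base-$n$ decomposition of the consecutive differences is unique, so the $k$-tuple $(\phi(v_0),\dots,\phi(v_{k-1}))\in\U^k$ recovers each $W_j$ and hence each column $(a_{i,j})_i$; this is the required injection.

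The main obstacle is the counting of admissible tuples. Both constraints ($\sum_{i,j}a_{i,j}\leq n-1$ and $\sum_{i,j}a_{i,j}n^{i-1}=(n^{k-1}-1)/(k-1)$) are simultaneously tight at the uniform tuple $a_{i,j}=\lfloor(n-1)/(k-1)^2\rfloor$, and since $a_{i,j}<n$ the base-$n$ representation of a fixed integer is unique, so a direct count inside the box $[0,\lfloor(n-1)/(k-1)^2\rfloor]^{(k-1)^2}$ collapses to a single admissible point. Overcoming this collapse is the technical heart of the proof: one must decouple the two constraints, for instance by allowing cycles of variable length (using several copies of the negative edge in the return portion to replace the single equality $\sum a_{i,j}n^{i-1}=(n^{k-1}-1)/(k-1)$ by a range of values $b\cdot(n^{k-1}-1)/(k-1)$ with $b\in\{1,2,\dots\}$), by enlarging the per-coordinate range while compensating through shared intermediate vertices across segments, or by combining several zero-sum sub-cycles that share the anchors $v_0,\dots,v_{k-1}$. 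Once such a refinement is shown to produce $\lfloor(n-1)/(k-1)^2\rfloor^{(k-1)^2}$ admissible configurations, the injection above yields the claimed lower bound on $|\U|$.
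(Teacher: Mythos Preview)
Your overall plan---linear $\U$, build zero-sum $n$-cycles from the positive weights $1,n,\dots,n^{k-2}$ with a single negative closing edge, record the images of $k$ anchors, and use Fact~\ref{fact:equality} plus base-$n$ uniqueness to recover the parameters---is exactly the paper's approach. The gap is in your counting step, and the ``obstacle'' you identify is self-inflicted.

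You restrict each $a_{i,j}$ to the box $[0,\lfloor(n-1)/(k-1)^2\rfloor]$ and then observe that the zero-sum constraint forces the unique uniform tuple. But there is no reason to impose that per-coordinate upper bound. The only constraints you actually need are the row sums $\sum_j a_{i,j}=\frac{n-1}{k-1}$ for each $i$: these simultaneously guarantee (i) the cycle has exactly $n-1$ positive edges, hence $n$ vertices, (ii) the total positive weight is $\frac{n-1}{k-1}T$, so the cycle is zero-sum, and (iii) each $a_{i,j}\le\frac{n-1}{k-1}<n$, so the base-$n$ decoding of segment weights is valid. Once you drop the artificial box constraint, the admissible tuples for each row $i$ are precisely the compositions of $\frac{n-1}{k-1}$ into nonnegative parts, and a standard binomial count gives the required lower bound; no ``decoupling'' or multiple negative edges are needed.

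One further adjustment is required to hit the stated exponent $(k-1)^2/k$ rather than $(k-1)(k-2)/k$: the paper uses $k$ segments (boxes), not $k-1$. The anchors are still $u_0,\dots,u_{k-1}\in\U$ (the starts of the $k$ boxes), so the target is still $\U^k$; the differences $\phi(u_i)-\phi(u_{i-1})$ recover boxes $1,\dots,k-1$ by base-$n$, and the $k$\textsuperscript{th} box is then determined by the row-sum constraints. This makes the parameter space $S^{k-1}$ with $S$ the set of length-$k$ compositions of $\frac{n-1}{k-1}$, whence
\[
|\U|^k \ge |S|^{k-1} = \binom{\tfrac{n-1}{k-1}+k-1}{k-1}^{\,k-1} \ge \left(\frac{n-1}{(k-1)^2}\right)^{(k-1)^2},
\]
which is exactly the claimed bound.
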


\begin{proof}
Let $\U$ be an $(n,W)$-universal graph. Thanks to Corollary~\ref{cor:linearisation} we can assume that $\U$ is linear.

We consider a class of $(n,W)$-graphs which are cycles of length $n$, and later use a subset of those for the lower bound.
Let $(w_1,\dots,w_{n-1}) \in \set{1,n,\dots,n^{k-2}}$.
The vertices are $[0,n)$. 
There is an edge $(i-1,w_i,i)$ for $i \in [1,n)$ and an edge $(n-1, -\frac{n-1}{k-1} T, 0)$.
To make the total weight in the cycle equal to $0$,
we assume that each $n^j$ appears exactly $\frac{n-1}{k-1}$ many times in $w_1,\dots,w_{n-1}$.

The cycles we use for the lower bound are described in the following way.
We let $S$ be the set of sequences of $k$ integers in $[0,n)$ such that $\sum_{\ell = 1}^k s_\ell = \frac{n-1}{k-1}$,
where we use the notation $(s_1, s_2, \dots, s_k)$ for an element~$s \in S$.
A tuple of $k-1$ sequences in $S$ induces an $(n,W)$-graph $G$.
Let $(s^{(0)}, \dots, s^{(k-2)}) \in S^{k-1}$, the induced graph is partitioned into $k$ parts.
In the $i$\textsuperscript{th} part the weight $n^j$ is used exactly $s^{(j)}_i$ many times.

\begin{figure}[ht]
\centering
\includegraphics[width=.7\linewidth]{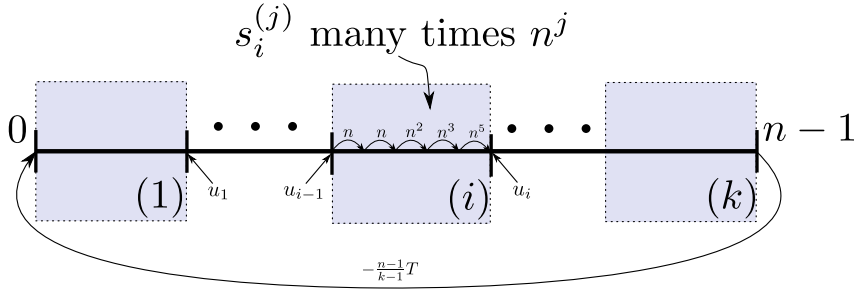}
\label{fig:construction}
\caption{Construction of the graph $G$ from the sequences $(s^{(0)}, \dots, s^{(k-2)}) \in S^{k-1}$.}
\end{figure}
In the drawing the $k$ parts are represented by boxes and numbered from $1$ to $k$ with a number in parenthesis.
To induce the box $(i)$ in this drawing we have 
$s^{(i)}_0 = 0, s^{(i)}_1 = 2, s^{(i)}_2 = 1, s^{(i)}_3 = 1, s^{(i)}_4 = 0$, and $s^{(i)}_5 = 1$.

The vertex in $G$ marking the end of the first box is $u_1 = \sum_{j = 0}^{k-2} s_1^{(j)} n^j$,
and the vertex marking the end of the $i$\textsuperscript{th} box is 
\[
u_i = \sum_{j = 0}^{k-2} \left( \sum_{\ell = 1}^i s_\ell^{(j)} \right) n^j.
\]
We note that writing the number $u_i$ in base $n$ we recover $\sum_{\ell = 1}^i s_\ell^{(j)}$
since this is a number in $[0,\frac{n-1}{k-1}]$, hence in particular in $[0,n)$.
Hence doing this for $u_1,\dots,u_{k-1}$ fully determines the sequences $(s^{(0)}, \dots, s^{(k-2)})$.
We also let $u_0 = 0$.

The constraint on the sums of sequences on $S$ ensures that indeed the graph $G$ is a cycle of total weight~$0$. Hence, $G$ is an $(n,W)$-graph,  and there exists a homomorphism $\phi : G \to \U$.
We define an injective function $f : S^{k-1} \to \U^k$ by
$f(s^{(0)}, \dots, s^{(k-2)}) = \left( \phi \left( u_i \right) : i \in [0,k) \right)$.

To see that $f$ is injective, we note that thanks to Fact~\ref{fact:equality} applied to the cycle $G$, we have
\[
\begin{array}{lll}
\phi(u_1) - \phi(u_0) & = & u_1 \\
\phi(u_2) - \phi(u_1) & = & u_2 \\
& \ \vdots & \\
\phi(u_{k-1}) - \phi(u_{k-2}) & = & u_{k-1},
\end{array}
\]
and as explained above the numbers $u_1,\dots,u_{k-1}$ fully determine the sequences $(s^{(0)}, \dots, s^{(k-2)})$.

Injectivity of $f$ implies that $|S|^{k-1} \le |\U|^k$.
The size of $S$ is 
\[
|S| = \binom{\frac{n-1}{k-1} + k-1}{k-1}
\ge \left( \frac{\frac{n-1}{k-1} + k-1}{k-1} \right)^{k-1} 
\ge \left( \frac{n-1}{(k-1)^2} \right)^{k-1},
\]
which implies (for $k$ constant)
$|\U| \geq \Omega \left( n^{ \frac{(k-1)^2}{k} } \right) = \Omega \left( n^{k-2} \right)$.
\end{proof}

\label{sec:number_weights}

\section*{Conclusions}
In this paper we have shown how to extend to mean payoff games 
the ideas developed for constructing quasipolynomial algorithms for parity games.
This relies on the combinatorial notion of separating automata (or equivalently: universal graphs).
We give upper bounds, yielding two new algorithms with the best complexity to date: sublinear in $N$, the largest weight (in absolute value),
and polynomial for a fixed number $k$ of weights.
We provide almost matching lower bounds, showing the limitations of this approach.
In particular, algorithms based on separating automata cannot solve mean payoff games in quasipolynomial time.

More precisely, our lower bounds show that for pathological sets of weights universal graphs are very large.
A more positive note is to consider $W = \set{(-n)^p : p \in [1,d]}$, the set of weights corresponding to parity games:
in this case we know that there exist $(n,W)$-universal graphs of quasipolynomial size (specifically $n^{O(\log(d))}$).
This motivates a deeper understanding of the size of $(n,W)$-universal graphs: for which sets of weights $W$
do there exist small universal graphs? Is there a meaningful hierarchy between parity and mean payoff games?


\bibliographystyle{alpha}
\bibliography{bib}

\end{document}